\definecolor{Darkgray}{gray}{0}
\newtheorem{proposition}{\bf Proposition}
\newtheorem{remark}{Remark}
\newcommand{\ve}[1]{\boldsymbol{#1}}
\newcommand{\argmax}{\operatornamewithlimits{argmax}}
\newcommand{\argmin}{\operatornamewithlimits{argmin}}
\newcommand\abs[1]{\left|#1\right|}
\newcolumntype{I}{!{\vrule width 1.2pt}}
\def\hlinewd#1{%
\noalign{\ifnum0=`}\fi\hrule \@height #1 %
\futurelet\reserved@a\@xhline}
\let\myorg@bibitem\bibitem
\def\bibitem#1#2\par{%
  \@ifundefined{bibitem@#1}{%
    \myorg@bibitem{#1}#2\par
  }{%
    \begingroup
      \color{\csname bibitem@#1\endcsname}%
      \myorg@bibitem{#1}#2\par
    \endgroup
  }%
}
\newcommand*{\bibitem@basardynamic}{black}    
\newcommand*\titleheader[1]{\gdef\@titleheader{#1}}
  \let\st@red@title\@title
  \def\@title{%
    \bgroup\normalfont\large\centering\@titleheader\par\egroup
    \vskip1.5em\st@red@title}
\begin{document}

\title{Network-Aware Demand-side Management Framework with A Community Energy Storage System Considering Voltage Constraints}
\author{Chathurika~P.~Mediwaththe,~\IEEEmembership{Member,~IEEE,}~Lachlan Blackhall,~\IEEEmembership{Senior~Member,~IEEE} %
\thanks{This work was supported by the Advancing Renewables Program, Australian Renewable Energy Agency under Grant 2018/ARP134.}
\thanks{ C. P. Mediwaththe and L. Blackhall are with the College of Engineering and Computer Science, The Australian National University, Canberra, ACT 0200, Australia.
e-mail: (chathurika.mediwaththe@anu.edu.au, lachlan.blackhall@anu.edu.au).}}%

\markboth{Accepted to appear in IEEE Transactions on Power Systems, 2020}%
{Accepted to appear in IEEE Transactions on Power Systems, 2020}
%

\maketitle

\begin{abstract}
This paper studies the feasibility of integrating a community energy storage (CES) system with rooftop photovoltaic (PV) power generation for demand-side management of a neighbourhood while maintaining the distribution network voltages within allowed limits. To this end, we develop a decentralized energy trading system between a CES provider and users with rooftop PV systems. By leveraging a linearized branch flow model for radial distribution networks, a voltage-constrained leader-follower Stackelberg game is developed wherein the CES provider maximizes revenue and the users minimize their personal energy costs by trading energy with the CES system and the grid. The Stackelberg game has a unique equilibrium at which the CES provider maximizes revenue and the users minimize energy costs at a unique Nash equilibrium. A case study, with realistic PV power generation and demand data, confirms that the energy trading system can reduce peak energy demand and prevent network voltage excursions, while delivering financial benefits to the users and the CES provider. Further, simulations highlight that, in comparison with a centralized system, the decentralized energy trading system provides greater economic benefits to the users with less energy storage capacity.
\end{abstract}

\begin{IEEEkeywords}
Community energy storage,~demand-side management,~distribution network,~ game theory,~photovoltaic power generation,~power flow,~voltage regulation.
\end{IEEEkeywords}

\IEEEpeerreviewmaketitle

\section*{Nomenclature}
\addcontentsline{toc}{section}{Nomenclature}
\begin{IEEEdescription}[\IEEEusemathlabelsep\IEEEsetlabelwidth{$V_1,V_2~~~~~~~~~~~$}] 
\item[\textit{Sets and Indices}] ~~~~~\vspace{5pt}
\item[$\mathcal{A},~a$] Set of all users, user index. 
\item[$\mathcal{N},~n$] Set of non-participating users, user index.
\item[$\mathcal{P},~p$] Set of participating users, user index.
\item[$\mathcal{T},~t$] Set of all time intervals, time index.
\item[$\mathcal{V}$, $i, j$] Set of buses (nodes), bus indices. 
\item[$\mathcal{A}_i$] Set of all users at bus $i$.
\item[$\mathcal{N}_i$] Set of non-participating users at bus $i$.
\item[$\mathcal{P}_i$] Set of participating users at bus $i$.
\item[ $\mathcal{E}$]  Set of distribution lines (directed edges).
\item[$\mathcal{J}_i$] Set of edges on the unique path from bus 0 to $i$. 
\item[$\mathcal{U}$]  Feasible strategy set of the CES provider.
\item[$\mathcal{P}_{\scaleto{+}{4.5pt}}(t)$]  Set of surplus energy users at time $t$.
\item[$\mathcal{P}_{\scaleto{-}{4.5pt}}(t)$] Set of deficit energy users at time $t$. 
\item[$\mathcal{Y}_p(t)$] Feasible strategy set of user $p \in \mathcal{P}$ at time $t$.\vspace{5pt}

\item[\textit{Parameters and Constants}]  ~~~~~\vspace{5pt}

\item[$H$] Total number of time intervals in $\mathcal{T}$.
\item[$M$] Number of participating users, $|\mathcal{P}|=M$.
\item[$\theta$]  Small positive constant.
\item[$\eta_c,~\eta_d $] Charging and discharging efficiencies of the CES.
\item[$r_{ij},~x_{ij}$]  Resistance, reactance of line $(i,j) \in \mathcal{E}$.
\item[$\phi_t,~\delta_t $] External grid price constants at time $t$. 
\item[$B_{\text{max}},~B_{\text{min}}$] Maximum and minimum energy capacity limits of the CES.
\item[$E_{\text{rev,max}},~E_{\text{fw,max}}$]  Maximum reverse (rev) and forward (fw) energy flow limits on the external grid.
\item[$S_{ij,\text{max}}$] Apparent power rating of line $(i,~j)\in \mathcal{E}$.
\item[$V_{\text{max}},~V_{\text{min}}$] Maximum and minimum voltage magnitude limits of the distribution network.
\item[$\lambda_{\text{min}}$]  Constant lower price limit, $\lambda_{\text{min}} > 0.$
\item[$\gamma^{\text{ch}}_{\text{max}},~\gamma^{\text{dis}}_{\text{max}}$ ]  Maximum charging and discharging power rates of the CES.
\item[$d_{a}(t)$] Energy demand of user $a \in \mathcal{A}$ at time $t$. 
\item[$d_{p}(t)$] Energy demand of user $p \in \mathcal{P}$ at time $t$. 
\item[$d_{n}(t)$] Energy demand of user $n \in \mathcal{N}$ at time $t$.
\item[$g_p(t)$] PV energy generation of user $p \in \mathcal{P}$ at time $t$.
\item[$q_a(t)$] Reactive power demand of user $a \in \mathcal{A}$ at time $t$.
\item[$q_p(t)$] Reactive power demand of user $p \in \mathcal{P}$ at time $t$.\vspace{5pt}

\item[\textit{Functions and Variables}] ~~~~~\vspace{5pt}

\item[$P_i,~Q_i$] Active and reactive power consumptions of bus $i$. 
\item[$P_{ij},~Q_{ij}$] Active and reactive power flows from bus $i$ to $j$.  
\item[$V_i$] Voltage magnitude of bus $i$.
\item[$C_p(\cdot)$] Energy cost function of user $p \in \mathcal{P}$.
\item[$W_s (\cdot)$] Revenue function of the CES provider.
\item[$b(t)$]  CES charge level at the end of time $t$. 
\item[$e_g(t)$] Energy traded between the external grid and the CES at time $t$.
\item[$e_p(t)$] Energy that user $p \in \mathcal{P}$ trades with the external grid at time $t$.
\item[$e_s(t)$]  Actual energy flowed into/out of the CES at time $t$.
\item[$s_p(t)$] Surplus energy at user $p \in \mathcal{P}$ at time $t$. 
\item[$y_p(t)$] Energy that user $p \in \mathcal{P}$ trades with the CES at time $t$.
\item[$E(t)$] Total external grid energy at time $t$.
\item[$E_{\mathcal{P}}(t)$] Total external grid energy of the users $\mathcal{P}$ at time $t$.
\item[$E_{\mathcal{N}}(t)$] Total external grid energy of the users $\mathcal{N}$ at time $t$.
\item[$E_{-p}(t)$] Total external grid energy excluding external grid energy of user $p$ at time $t$. 
\item[$\lambda_g(t)$] External grid energy price at time $t$.
\item[$\lambda_s(t)$] CES provider's energy price at time $t$.\vspace{5pt}

\item[\textit{Other notations}]   ~~~~~\vspace{5pt}
\item[$\mathcal{G}$ ] Rooted tree in graph theory.
\item[$\Gamma $] Non-cooperative game of the users $\mathcal{P}$.
\item[$\Theta$] Stackelberg game between the CES provider and the users $\mathcal{P}$.
\item[$\mathcal{L}$] Leader.
\item[$\mathcal{F}$]  Followers.

\end{IEEEdescription}

\section{Introduction}\label{Intro}

The shift toward affordable electricity has increased the market penetration of rooftop photovoltaic (PV) systems worldwide. However, due to large-scale integration of rooftop PV systems, low-voltage distribution networks frequently experience over-voltage conditions when PV generation exceeds household demand. Distribution networks are also susceptible to voltage drops below permissible limits due to peak electricity demand. Energy storage systems (ESSs) located close to users have paved the way for developing effective voltage regulation methods for distribution networks and for developing demand-side management (DSM) methods to accomodate peak energy demand without upgrading the existing grid infrastructure \cite{feeder_model}. With the close proximity to users, community energy storage (CES) systems can be used to develop innovative DSM approaches by exploiting rooftop PV power generation. Additionally, CES systems can be utilized to mitigate voltage excursions in distribution networks without curtailing excessive PV power generation.

A DSM framework that utilizes a CES system with user-owned PV power generation requires economic incentives for all stakeholders including users and the storage provider. Decentralized methods that can distribute the decision-making to individuals are more preferable than centralized methods since obtaining full access to users’ personal energy usage information by a central entity in centralized DSM may be difficult and less practical \cite{Celik, MG3}. Additionally, scheduling a CES system by merely maximizing the economic benefits may not be feasible in practice since the energy dispatch schedules produced by such frameworks may violate critical network constraints such as voltage limits in distribution networks.

In this paper, we study the extent to which a CES system can mitigate voltage excursions in a distribution network with high penetration of rooftop PV systems while performing DSM. To this end, we develop a decentralized energy trading system between a CES provider and users with rooftop PV systems. In the system, users determine the energy amounts that can be traded with the CES system and the grid by minimizing personal energy costs. Additionally, the CES provider maximizes revenue by setting a price signal for the energy transactions with the users and by determining the energy transactions with the grid. The paper has the following key contributions: 
\begin{itemize}
\item By leveraging a linearized branch flow model, a voltage-constrained non-cooperative Stackelberg game is developed to study the decentralized energy trading between the CES provider and the users while complying with the voltage limits of a radial distribution network. 
\item We prove that the Stackelberg game has a unique pure strategy Stackelberg equilibrium at which the CES provider maximizes revenue and the users minimize personal energy costs at a unique Nash equilibrium.
\end{itemize} 

Non-cooperative game theory has been exploited to develop decentralized DSM frameworks that maximize economic benefits to individual users by coordinating ESSs while satisfying the network voltage constraints \cite{MG1,Gupta}. For instance, a DSM framework to maximize per-user economic benefits by coordinating user-owned ESSs has been studied in \cite{Gupta} by developing a voltage-constrained game among the users. In this context, to the best of our knowledge, this paper is the first to leverage Stackelberg game theory with a branch power flow model to study a decentralized energy trading framework between a CES system and users with PV generation for DSM in a neighborhood while satisfying the network voltage constraints.

This paper has a key difference to our previous work \cite{Chathu1}. In \cite{Chathu1}, the energy trading between the CES system and the users has been studied by merely maximizing the economic benefits for the storage provider and the users. In contrast, this work studies the physical network integration of the energy trading framework by explicitly taking into account the underlying distribution network voltage constraints. 

The remainder of the paper is organized as follows. Section~\ref{relWrk} presents related work, and Section~\ref{Sys_models} presents the system models of the energy trading system including the distribution network power flow model. Section~\ref{sec_system} presents the game-theoretic formulation of the system, and Section~\ref{sec:5} presents simulation results. Section~\ref{conclusion} concludes the paper.
 
\section{Related Work}\label{relWrk}

Scheduling ESSs to achieve DSM while complying with the distribution network constraints has generally been studied as an optimal power flow (OPF) problem. DSM-oriented OPF problems, that can be effectively optimized by a central entity, e.g., the distribution network operator, with the full system information availability, have been extensively studied in literature \cite{Tant, Karthikeyan2, Hu_Li, Gayme}. For instance, the OPF-based DSM method in \cite{Gayme} minimizes the total active power generation cost by scheduling ESSs subject to network voltage constraints. 
 
Decentralized methods have been explored by distributing the power scheduling computation of the centralized OPF problem to the local controllers at distributed ESSs and power generators \cite{MG3,DallAnese, Lam, Bitar}. For instance, in \cite{MG3}, a decentralized scheme based on predictor corrector proximal multiplier algorithm is proposed to solve a centralized OPF problem that can be used to find the cost-optimal active and reactive power set-points of ESSs in a microgrid. More recently, decentralized peer-to-peer energy trading frameworks have been proposed to maintain the demand-supply balance in distribution networks while ensuring the network voltages are within allowed limits \cite{1_peer_to_peer,3_peer_to_peer, 4_peer_to_peer}. For instance, in \cite{1_peer_to_peer}, by combining a voltage sensitivity analysis method with a distributed ledger technology, a decentralized peer-to-peer energy trading scheme is proposed to guarantee the energy transactions between users do not violate voltage constraints in a radial distribution network. In contrast to prior work, this paper studies a decentralized energy trading framework between a CES system and users with PV power generation by enabling the users and the CES provider to selfishly maximize their economic benefits while maintaining the network voltages within permissible limits.
 
\section{System Models of the Energy Trading System}\label{Sys_models}

The energy trading system comprises energy users and a CES system as depicted in Fig.~\ref{fig:config_system}. The CES system is owned by a third party, referred to as the CES provider, that provides storage services \cite{CES_ex}. This section first explains the demand-side models, describing the roles of the energy users, followed by the CES model, and the energy cost models. Then the distribution network power flow model is presented. Before integrating the demand-side and the CES models with the network power flow model, a generic representation of energy transactions among the key system entities, as shown in Fig.~\ref{fig:config_system}, is used for the clarity of explaining the role of each entity. 
\begin{figure}[b!]
\centering
\includegraphics[width=0.75\columnwidth]{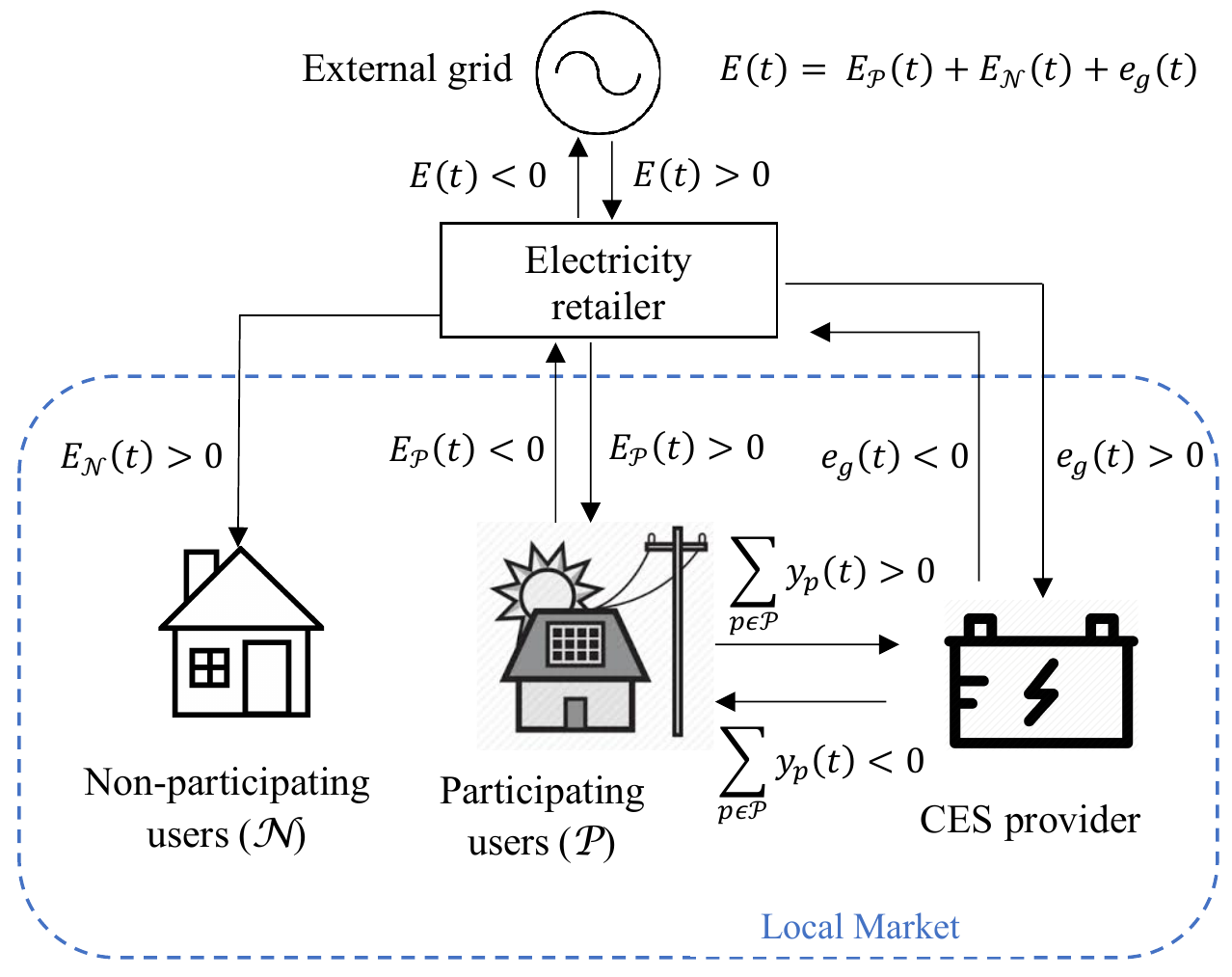}
\caption{A representation of the energy transactions among the system entities.}
\label{fig:config_system}
\end{figure}

\subsection{Demand-side Model}\label{Demand_model}
The demand-side of the energy trading system comprises two sets of users; participating users $\mathcal{P}$ and non-participating users $\mathcal{N}$, and $\mathcal{A} = \mathcal{P} ~\cup~\mathcal{N}$. The households of the users $\mathcal{P}$ are equipped with rooftop PV systems, operated at unity power factor, without behind-the-meter ESSs. The users $\mathcal{P}$ consume energy generated from their rooftop PV systems first to supply energy demand. Then if there is an energy deficit, the users $\mathcal{P}$ may decide to buy energy either from the external grid and/or from the CES system. If there is surplus PV energy, the users $\mathcal{P}$ may decide to sell some or all that energy to the CES system in addition to selling to the external grid. Likewise, the users $\mathcal{P}$ participate in the energy trading optimization framework, and it is considered that the households of the users $\mathcal{P}$ are equipped with controlling devices that make the energy trading decisions on their behalf. The users $\mathcal{N}$ do not participate in the energy trading optimization framework and are considered to be the traditional grid energy users without any local power generation or energy storage. The retailer, as shown in Fig.~\ref{fig:config_system}, acts as the middle-man between the external grid and the local market and coordinates the external grid energy transactions of the users $\mathcal{A}$ and the CES provider for billing purposes. 
 
The time period, typically one day, is divided into $H$ time intervals with length $\Delta t$, and $\mathcal{T} = \{1,2,\dotsm,H\}$. By considering the PV energy generation and energy demand fluctuations, the users $\mathcal{P}$ are sub-divided into two time-dependent sets; $\mathcal{P}_{\scaleto{+}{4.5pt}}(t)$ and $\mathcal{P}_{\scaleto{-}{4.5pt}}(t)$ such that $\mathcal{P} =\mathcal{P}_{\scaleto{+}{4.5pt}}(t)\cup \mathcal{P}_{\scaleto{-}{4.5pt}}(t)$. We take $d_a(t)\geq 0$ for each user $a \in \mathcal{A}$, and $g_p(t)\geq 0$ for each user $p\in \mathcal{P}$.
Surplus energy at user $p \in \mathcal{P}$ is given by
\begin{equation}
s_p(t) = g_p(t) - d_p(t). \label{eq:id1a}
\end{equation}
If $s_p(t)~\geq~0$, then user $p \in \mathcal{P}_{\scaleto{+}{4.5pt}}(t)$, and if $s_p(t)<0$, then user $p \in \mathcal{P}_{\scaleto{-}{4.5pt}}(t)$. Each user $p\in \mathcal{P}$ evaluates $y_p(t)$ based on $s_p(t)$. These strategies are evaluated day-ahead by using the day-ahead PV energy generation and demand forecasts, and we assume that the users $\mathcal{P}$ have accurate forecasts. $y_p(t) > 0$ if the user sells energy to the CES system, and $y_p(t) < 0$ if they buy energy from the CES system. 
It is specified that
\begin{subequations}\label{eq:id2a}
\begin{gather}
               0 \leq  y_p(t) \leq  s_p(t),~~~\text{if}~p \in \mathcal{P}_{\scaleto{+}{4.5pt}}(t), \label{subeq2a_1}\\
               s_p(t) \leq y_p(t) \leq 0,~~~\text{if}~p \in \mathcal{P}_{\scaleto{-}{4.5pt}}(t). \label{subeq2a_2}            
\end{gather}
\end{subequations}
Additionally, $e_p(t) > 0$ if user $p \in \mathcal{P}$ buys energy from the external grid, and $e_p(t) < 0$ if they sell energy to the external grid. Then, from the energy balance at user $p \in \mathcal{P}$, 
\begin{equation}
s_p(t) = y_p(t) - e_p(t). \label{eq:id2aa}
\end{equation}

\subsection{Community Energy Storage Model}\label{SES_model}
The CES provider computes $e_g(t)$ where $e_g(t) > 0$ if they buy energy from the external grid, and $e_g(t) < 0$ if they sell energy to the external grid. Given the energy trading decisions $y_p(t)$ and $e_g(t)$ made by the users $\mathcal{P}$ and the CES provider, respectively, actual energy flowed into/out of the CES system at time $t$ is calculated by 
\begin{equation}
e_s(t) =  e_g(t) + \sum_{p\in \mathcal{P}}y_p(t). \label{eq:id1}
\end{equation}
In \eqref{eq:id1}, if $e_s(t) > 0$, then the CES system is charging, and if $e_s(t) < 0$, then the CES system is discharging. To satisfy the storage charging/discharging power rates, it is required that
\begin{equation}
-\gamma^{\text{dis}}_{\text{max}} \leq \frac{e_s(t)}{\Delta t }\leq \gamma^{\text{ch}}_{\text{max}}, ~\forall{t \in \mathcal{T}}. \label{eq:id2}
\end{equation}

By taking $0<\eta_c \leq 1$  and $\eta_d \geq 1$, and by denoting the energy charge level of the CES system at the beginning of time $t$ as $b(t-1)$, $b(t)$ is given by
\begin{equation}
b(t) = \begin{cases}
b(t-1) + \eta_c e_s(t), &\text{if $e_s(t) \geq 0$},\\
b(t-1) + \eta_d e_s(t), &\text{if $e_s(t) <0$.}
\end{cases}
\end{equation} \label{eq:id3}
To ensure that the energy charge levels of the CES system remain within the storage capacity limits, it is specified
\begin{equation}
B_{\text{min}}\leq b(t) \leq B_{\text{max}},~\forall{t \in \mathcal{T}}. \label{eq:id4}
\end{equation}

Finally, to ensure the continuous operation of the CES system for the next day and to avoid over-charging or over-discharging of the CES system, it is considered that the initial charge level $b(0)$ would be approximately the same as the charge level $b(H)$ at the end of time $H$ \cite{Atzeni}. Then
\begin{equation}
\abs{b(H) - b(0)} \leq  \theta. \label{eq:id7}
\end{equation}

\subsection{Energy Pricing and Cost Models} \label{cost_model}
 In the system, the external grid energy price imposed by the retailer at time $t$ is given by
\begin{equation}
\lambda_g(t) = \phi_t E(t) + \delta_t \label{eq:id8}
\end{equation}
where $\phi_t,~\delta_t > 0$ can be obtained through a day-ahead electricity market clearing process \cite{Atzeni}. The dynamic price function \eqref{eq:id8} is widely used in the smart grid literature and can be used to encourage users to shift their peak energy demand to non-peak hours \cite{Lambotharan,Gupta,Atzeni}. $E(t) = E_{\mathcal{P}}(t) + E_{{\mathcal{N}}}(t) + e_g(t)$ given that $E_{\mathcal{P}}(t) = \sum_{p\in \mathcal{P}}e_p(t)$ and $E_{\mathcal{N}}(t) = \sum_{n\in \mathcal{N}}d_n(t)$. At time $t$, $E(t)$ can be either positive or negative. For instance, if $E_{\mathcal{P}}(t) < 0$, due to all users in $\mathcal{P}$ having positive surplus energy, i.e., $s_p(t) > 0,~\forall p\in \mathcal{P}$, and if  $|E_{\mathcal{P}}(t)| > |E_{{\mathcal{N}}}(t) + e_g(t)|$, then $E(t) < 0$. With the existence of negative $E(t)$, the price calculation in \eqref{eq:id8} can be negative. Hence, to ensure positive grid prices at each time $ t $, it is specified 
\begin{equation}
 \lambda_{\text{min}}  \leq \lambda_g(t). \label{eq:id8aa}
\end{equation}
Similar to \cite{Gupta}, it is supposed that $-E_{\text{rev,max}} \leq E(t) \leq E_{\text{fw,max}}$ without overloading the distribution transformer.

The retailer adopts a one-for-one non-dispatchable energy buyback scheme \cite{Solar_scheme} where it applies the same price for both selling and buying external grid energy transactions of the users $\mathcal{P}$ and the CES provider. For instance, if user $p \in \mathcal{P}$ buys $e_p(t)$ energy from the external grid, they incur an energy cost of $\lambda_g(t)e_p(t)$ whereas if user $p$ sells $e_p(t)$ energy to the grid, then they receive $\lambda_g(t)e_p(t)$ payment from the retailer.

In the system, the CES provider sets a time-dependent price $\lambda_s(t) >0 $ for the energy transactions between the users $\mathcal{P}$ and the CES system. Then $C_p (t)$ for user $p\in \mathcal{P}$ is given by
\begin{equation}
C_p (t)= \lambda_g(t)e_p(t) - \lambda_s(t)y_p(t). \label{eq:id9}
\end{equation}

Since the CES provider trades energy $e_g(t)$ with the retailer at price $\lambda_g(t)$ and energy $y_p(t)$ with the users $\mathcal{P}$ at price $\lambda_s(t)$, their total revenue is given by
\begin{equation}
W_s= \sum_{t=1}^H\Big( - \lambda_s(t)\sum_{p\in \mathcal{P}}y_p(t) - \lambda_g(t)e_g(t)\Big). \label{eq:id10}
\end{equation}

\subsection{Distribution Network Power Flow Model}\label{PF_theory}
We consider integrating the demand-side and CES models in Sections~\ref{Demand_model} and \ref{SES_model} with a radial distribution network and employ the Distflow equations developed in \cite{Baran1} to model the network power flow. For simplicity, time index $t$ is omitted in power and voltages in equations \eqref{eq:id16_a} - \eqref{eq:id22_a}.  

Consider a radial distribution network described by $\mathcal{G}=(\mathcal{V}, \mathcal{E})$ with $\mathcal{V}= \{0,1,\dotsm,\text{N}\}$ and $\mathcal{E} = \{(i,j)\} \subset \mathcal{V}\times \mathcal{V}$. Root bus 0 represents the secondary of the distribution transformer and is considered to be the slack bus. The natural radial network orientation is considered where every distribution line points away from bus 0 \cite{Bitar}. It is supposed that $V_0$ is fixed and known. The Distflow equations are given by, for each bus $ j = \{1,\dotsm,\text{N}\}$ and their parent bus $i \in \mathcal{V}$,
\begin{subequations}\label{eq:id16_a}
\begin{gather}
 P_{ij} = P_{j} + \sum_{k: (j,k) \in \mathcal{E}} P_{jk} + r_{ij}\ell_{ij},\\
Q_{ij} = Q_{j} + \sum_{k: (j,k) \in \mathcal{E}} Q_{jk} + x_{ij}\ell_{ij},\\
V^2_{j} = V^2_{i} - 2(r_{ij}P_{ij}+x_{ij}Q_{ij})+ (r^2_{ij}+x^2_{ij})\ell_{ij}
\end{gather}
\end{subequations}
where $\ell_{ij} = \frac{P_{ij}^2 + Q_{ij}^2 }{V^2_{i}}$. We exploit the linear approximation of the power flow model \eqref{eq:id16_a}  developed in \cite{Baran1}. The linear approximation relies on the assumption that $\ell_{ij}=0$, and has been extensively justified and used in power flow calculations in radial distribution networks \cite{MG1,Bitar}. The compact form of the linearized Distflow equations is given by \cite{Bitar}
\begin{equation}
 \mathbf{V} = \mathbf{R}\mathbf{P} + \mathbf{X}\mathbf{Q} + V_0^2\mathbf{1} \label{eq:id22_a}
\end{equation} 
where $ \mathbf{V} = (V_1^2,\dotsm,V_{\text{N}}^2)^{\text{T}}$, ~$\mathbf{P} = (P_1,\dotsm,P_{\text{N}})^{\text{T}}$ and $\mathbf{Q} = (Q_1,\dotsm,Q_{\text{N}})^{\text{T}}$. Additionally, $\mathbf{1}$ is the N-dimensional vector of all ones. $\mathbf{R}, ~\mathbf{X}\in \mathbb{R}^{\text{N}\times \text{N}}$ where $\mathbf{R}_{ij} = -2\sum_{(h,k)\in \mathcal{J}_i \cap \mathcal{J}_j}r_{hk}$, ~$\mathbf{X}_{ij} = -2\sum_{(h,k)\in \mathcal{J}_i \cap \mathcal{J}_j}x_{hk}$, and $\mathcal{J}_i \subset \mathcal{E}$. It is required that the vector of voltage magnitudes $\mathbf{V}$ in \eqref{eq:id22_a} to satisfy
\begin{equation}
\mathbf{V}_{\text{min}}~\leq~\mathbf{V}~\leq~\mathbf{V}_{\text{max}},~\forall t\in \mathcal{T}. \label{eq:id23_a}
\end{equation}
In \eqref{eq:id23_a}, $\mathbf{V}_{\text{min}} \in \mathbb{R}^{\text{N}\times 1}$ where all entries being $V_{\text{min}}^2$ and $\mathbf{V}_{\text{max}} \in \mathbb{R}^{\text{N}\times 1}$ where all entries being $V_{\text{max}}^2$.

We consider, at each bus $i\in \mathcal{V}\backslash \{0\}$, $P_i$ comprises the total active power consumption and $Q_i$ comprises the total reactive power consumption of both participating and non-participating users at bus $i$, at time $t$. Additionally, if the CES system is at bus $i\in \mathcal{V}\backslash \{0\}$, the CES active power consumption, $\frac{e_s(t)}{\Delta t}$, is also included in $P_i$. It is taken that the PV and CES inverters operate at unity power factor. Hence, if the CES system is at bus $i\in \mathcal{V}\backslash \{0\}$, then $P_{i} = \frac{1}{\Delta t} \Big( -\sum_{p\in \mathcal{P}_i}s_p(t) + \sum_{n\in \mathcal{N}_i}d_n(t)  + e_s(t)\Big)$. Otherwise, $P_{i} = \frac{1}{\Delta t} \Big( -\sum_{p\in \mathcal{P}_i}s_p(t) + \sum_{n\in \mathcal{N}_i}d_n(t) \Big)$. Here, $s_p(t)$ can be calculated by using \eqref{eq:id1a}. Additionally, $\mathcal{P}_i \subset \mathcal{P}$ and $\mathcal{N}_i \subset \mathcal{N}$. The reactive power consumption $Q_i$ in \eqref{eq:id22_a} can be calculated by using $Q_i= \sum_{a\in \mathcal{A}_i}q_a(t)$ where $ \mathcal{A}_i = \mathcal{P}_i \cup \mathcal{N}_i $. 

\begin{remark}
Power flows may be restricted by line flow limits given by $P^2_{ij} +   Q_{ij}^2\leq S^2_{ij,\text{max}},~\forall (i,j) \in \mathcal{E}$ \cite{Taylor}. 
These line flow constraints can be included as quadratic constraints in the optimization problem in Section~\ref{Obj_SES} which can then be solved as a quadratically-constrained quadratic program \cite{boyd2004convex}.
\end{remark}

\section{Game-theoretic Formulation of the Energy Trading System} \label{sec_system}
The energy trading interactions between the CES provider and the users $\mathcal{P}$ are studied using a non-cooperative Stackelberg game. Generally, in a Stackelberg game, one player acts as the leader and moves first to make decisions, and the rest of the players follow the leader's decisions to make their own decisions. Here, the CES provider acts as the leader and determines the optimal values for $(\lambda_s(t),~e_g(t))$. The users $\mathcal{P}$ are the followers and determine the optimal values for $y_p(t)$. We exploit the backward induction method \cite{gametheoryessentials} where the followers' actions, i.e., $y_p(t)$, are determined first using the knowledge of leader's actions, i.e., $(\lambda_s(t),~e_g(t))$, and then the analysis proceeds backwards to determine the leader's actions.\footnote{To make the game-theoretic analysis tractable, in this paper, we assume accurate forecasts of PV power generation and electricity demand, and the perfect knowledge of player actions. To handle imperfect information from inaccurate energy forecasts and from not having access to perfect knowledge of actions of other players, an interesting future work can incorporate stochastic game theory with imperfect information with the game-theoretic energy trading framework  \cite{gametheoryessentials}.} This process is explained in the next two subsections. 

\subsection{Objectives of the Participating Users}\label{Obj_par}
The primary objective of user $p \in \mathcal{P}$ is to minimize their individual energy cost \eqref{eq:id9} at each time $t \in \mathcal{T}$ given feasible values for $(\lambda_s(t),~e_g(t))$. By substituting \eqref{eq:id8} in \eqref{eq:id9} with \eqref{eq:id2aa}, \eqref{eq:id9} can be written as a quadratic function of $y_p(t)$ as 
\begin{equation}
C_p(y_p(t)) = K_2y_p(t)^2 + K_1y_p(t) + K_0.  \label{eq:id11}
\end{equation} 
Here, $K_2 = \phi_t,~K_1= -\big(\phi_t(2s_p(t)-E_{-p}(t))-\delta_t + \lambda_s(t)\big)$, $K_0 = -\lambda_s(t)s_p(t)$ with $E_{-p}(t) = E_{{\mathcal{P}}\backslash p} (t) + E_{\mathcal{N}}(t) +  e_g(t)$. $E_{{\mathcal{P}}\backslash p} (t)$ denotes total external grid energy of the other users $\mathcal{P}\backslash p$, i.e., $E_{{\mathcal{P}}\backslash p} (t) = E_{\mathcal{P}}(t)-e_p(t)$. 

Since the cost function \eqref{eq:id11} depends on the actions of the other users $\mathcal{P}\backslash p$, a non-cooperative game that can be played by the users $\mathcal{P}$ at time $t$ is formulated to determine the optimal values for $y_p(t)$. The strategic form of the game among the users $\mathcal{P}$ at time $t$ is denoted by  $\Gamma \equiv \langle \mathcal{P},~\mathcal{Y},~\mathcal{C}\rangle $ where $\mathcal{Y} = \prod_{p \in \mathcal{P}} \mathcal{Y}_p (t)$ and $\mathcal{C} = \{C_1(t),\dotsm,C_M(t)\}$. Here, $ \mathcal{Y}_p (t)$ is subject to constraints \eqref{eq:id2a}. Suppose the CES energy transactions profile of the users $\mathcal{P}$ at time $t$ as $\ve{y}(t) = (y_1(t),\dotsm,y_M(t))$, and the CES energy transactions profile of the other users $\mathcal{P}\backslash p$ as $\ve{y}_{\mathcal{P}\backslash p}(t) = (y_1(t),\dotsm,y_{p-1}(t),y_{p+1}(t),\dotsm,y_M (t))$. Then $C_p(t) \equiv C_p(y_p(t),\ve{y}_{\mathcal{P}\backslash p}(t))$.
At time $t$, user $p\in \mathcal{P}$ determines
\begin{equation}
\tilde y_p(t) = \argmin_{y_p(t)~\in~\mathcal{Y}_p(t)}~C_n(y_p(t),~\ve{y}_{\mathcal{P}\backslash p}(t)). \label{eq:id12}
\end{equation} 

\begin{proposition}\label{prop1a}
The game $\Gamma$ has a unique pure strategy Nash equilibrium for any given feasible values of $(\lambda_s(t),~e_g(t))$.
\end{proposition}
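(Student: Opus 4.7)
The plan is to establish existence of a pure-strategy Nash equilibrium of $\Gamma$ via the Debreu--Glicksberg--Fan theorem, and then deduce uniqueness from Rosen's diagonal strict convexity condition applied to the pseudo-gradient. Throughout, $(\lambda_s(t), e_g(t))$, $E_{\mathcal{N}}(t)$, $\phi_t$, $\delta_t$ and all $s_q(t)$ are treated as constants; only the $y_q(t)$ with $q \neq p$ couple user $p$ to the other followers, and they enter $C_p$ exclusively through $E_{-p}(t)$.

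First I would verify the structural hypotheses of the existence theorem. By \eqref{eq:id2a}, each strategy set $\mathcal{Y}_p(t)$ is either $[0, s_p(t)]$ or $[s_p(t), 0]$, hence non-empty, convex, and compact. Using \eqref{eq:id2aa} gives $E_{-p}(t) = \sum_{q \in \mathcal{P}\backslash p}(y_q(t) - s_q(t)) + E_{\mathcal{N}}(t) + e_g(t)$, which is affine in $\ve{y}_{\mathcal{P}\backslash p}(t)$; substituting into \eqref{eq:id11} shows $C_p$ is a jointly continuous polynomial in $(y_p(t), \ve{y}_{\mathcal{P}\backslash p}(t))$ and strictly convex in its own argument because $K_2 = \phi_t > 0$. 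These conditions yield at least one pure-strategy Nash equilibrium.

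For uniqueness, I would compute the Jacobian $G$ of the pseudo-gradient $\sigma(\ve{y}(t)) = (\partial C_p/\partial y_p)_{p=1}^M$. From \eqref{eq:id11}, $\partial C_p/\partial y_p = 2\phi_t y_p(t) + K_1$, and since $K_1$ depends on $\ve{y}_{\mathcal{P}\backslash p}(t)$ only through the term $\phi_t E_{-p}(t)$, which has coefficient $+1$ on each $y_q(t)$ with $q \neq p$, I expect
\begin{equation*}
G_{pp} = 2\phi_t, \qquad G_{pq} = \phi_t \text{ for } q \neq p.
\end{equation*}
Thus $G = \phi_t(I + \mathbf{1}\mathbf{1}^{\text{T}})$ is constant and symmetric, with eigenvalues $\phi_t$ (multiplicity $M-1$) and $\phi_t(M+1)$, all strictly positive. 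Therefore $G$ is positive definite on the product domain $\mathcal{Y}$, the game is diagonally strictly convex in the sense of Rosen, and the Nash equilibrium is unique.

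The step I expect to require the most care is the transition from an interior analysis of the pseudo-gradient to the constrained setting, since each $\mathcal{Y}_p(t)$ is a closed interval and the equilibrium may be active at one of the bounds \eqref{subeq2a_1}--\eqref{subeq2a_2}. I would handle this by observing that Rosen's condition only requires positive definiteness of $G$ on the full product domain $\mathcal{Y}$, which holds here because $G$ is a constant matrix independent of $\ve{y}(t)$; the uniqueness conclusion is therefore insensitive to whether components of the equilibrium are interior or boundary. A subsidiary point worth flagging is that the off-diagonal coefficient is exactly $\phi_t$ (and not larger), so the smallest eigenvalue of $G$ remains $\phi_t > 0$ independently of $M$, making the argument robust to the number of participating users.
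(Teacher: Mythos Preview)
Your proof is correct and, in spirit, follows the same Rosen-based route as the paper: verify that each $\mathcal{Y}_p(t)$ is compact and convex, that each $C_p$ is continuous and strictly convex in its own variable, and then invoke Rosen's 1965 result. The paper's argument stops there, citing \cite{rosen} after noting own-variable strict convexity and compactness of the strategy sets.

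The substantive addition in your version is that you actually verify the hypothesis Rosen requires for \emph{uniqueness}, namely diagonal strict convexity, by computing the pseudo-gradient Jacobian $G=\phi_t(I+\mathbf{1}\mathbf{1}^{\mathrm T})$ and showing it is positive definite. The paper omits this step and tacitly treats own-strict-convexity as sufficient for uniqueness, which it is not in general (Rosen's Theorem~1 gives existence from concavity/convexity in own variable, but Theorem~2 on uniqueness needs the DSC condition). Your explicit eigenvalue calculation, together with the remark that $G$ is constant so the argument is unaffected by active box constraints, closes this gap cleanly. In short: same framework, but your proof supplies the uniqueness verification that the paper leaves implicit.
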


\begin{proof}
At a Nash equilibrium, no user can benefit by unilaterally changing their own strategy while the other users play their Nash equilibrium strategies \cite{gametheoryessentials}. Clearly, because $\phi_t >0$, the second derivative of \eqref{eq:id11} with respect to $y_p(t)$ is positive and therefore, \eqref{eq:id11} is strictly convex for given feasible strategy profile of $\ve{y}_{\mathcal{P}\backslash p}(t)$. Therefore, the objective function in \eqref{eq:id12} is strictly convex. Also, the individual strategy sets $\mathcal{Y}_p(t)$ are compact and convex as they are subject to linear inequalities \eqref{eq:id2a} \cite{boyd2004convex}. Thus, the existence of a unique Nash equilibrium with pure strategies for the game $\Gamma$ is guaranteed \cite{rosen}.
\end{proof}
In the game $\Gamma$, the best response of each user $p\in \mathcal{P}$ for given $\ve{y}_{\mathcal{P}\backslash p}(t)$ is found by using
\begin{equation}
 \frac{\partial{C_p(y_p(t))}}{\partial{y_p(t)}} = 2K_2\tilde y_p(t) + K_1 = 0. \label{eq:id13}
\end{equation} 
Then to find the Nash equilibrium solutions, \eqref{eq:id13} is solved for all users in $\mathcal{P}$ and that leads to solving $M$ number of simultaneous equations. Once these $M$ equations are solved with the expressions $K_2$ and $K_1$ in \eqref{eq:id11}, the optimal response of user $p \in \mathcal{P}$ at the Nash equilibrium $y^*_p(t)$ is given by
 \begin{equation}
y^*_p(t)= s_p(t) ~+~\epsilon(t) \label{eq:id14}
\end{equation}
where $\epsilon(t) = (M+1)^{-1}\big[\phi^{-1}_t(\lambda_s(t)-\delta_t) - E_{\mathcal{N}}(t) - e_g(t)\big]$.  However, to form the Nash equilibrium of the game $\Gamma$, $y^*_p(t)$ should also satisfy constraints \eqref{eq:id2a}. It can be seen that $ y^*_p(t)$ in \eqref{eq:id14} are functions of the CES provider's strategies $(\lambda_s(t),~e_g(t))$. Thus, any values for $(\lambda_s(t),~e_g(t))$ would not guarantee that $y^*_p(t)$ satisfies constraints \eqref{eq:id2a}. Hence, to ensure $y^*_p(t)$ satisfies \eqref{eq:id2a} for each user $p \in \mathcal{P}$, auxiliary constraints \eqref{eq:id16} are considered in the revenue maximization problem of the CES provider\footnote{The consideration of the auxiliary constraints \eqref{eq:id16} in the CES provider's revenue optimization problem is a potential implementation where, depending on the nature of the users $\mathcal{P}$, the CES provider only needs to know the maximum or the minimum surplus energy amount of the users $\mathcal{P}$. In an alternative method, in response to $(\lambda_s(t),~e_g(t))$, the users $\mathcal{P}$ may solve their individual optimization problems \eqref{eq:id12} iteratively by incorporating the constraints \eqref{eq:id2a} until the Nash equilibrium is reached which requires additional computation time for the iterative negotiation.} as described in the next subsection.

\subsection{Objective of the Community Energy Storage Provider} \label{Obj_SES}
As per backward induction, the aggregated Nash equilibrium CES energy amounts of the users $\mathcal{P}$, i.e., $\sum_{p\in \mathcal{P}} y^*_p(t)$, can be substituted in the revenue function \eqref{eq:id10} which can then be written in terms of the CES provider's actions $(\lambda_s(t),~e_g(t))$. Then the CES provider's revenue maximization problem is to determine
\begin{subequations}\label{eq:id15}
\begin{gather}
\ve{\rho^*} = \argmax_{\ve{\rho}~\in~\mathcal{U}}\sum_{t=1}^H\big( \mu_1 \lambda_s(t)^2 + \mu_2 \lambda_s(t) +  \mu_3 e_g(t)^2 + \mu_4 e_g(t) \big),\\
\intertext{where constant coefficients, $ \mu_1,~ \mu_2,~ \mu_3,~ \mu_4$, are given by}
\mu_1 =  \frac{-M}{\phi_t~(M+1)}, \label{eq:id15_coeff1}\\
\mu_2 = \frac{M}{(M+1)} \Big(E_{\mathcal N}(t) + \frac{\delta_t}{\phi_t}\Big) -\sum_{p \in \mathcal{P}}s_p(t), \label{eq:id15_coeff2}\\
\mu_3 = \frac{-\phi_t}{(M+1)},\label{eq:id15_coeff3}\\
\mu_4 = \frac{-(\phi_tE_{\mathcal{N}}(t) + \delta_t)}{(M+1)}\label{eq:id15_coeff4}.
\end{gather}
\end{subequations}
Additionally, the matrix of decision variables of the CES provider $\ve{\rho} = (\ve{\lambda_s}, \ve{e_g})$ with $\ve{\lambda_s} = (\lambda_s(1),\dotsm,\lambda_s(H))^{\text{T}}$ and $\ve{e_g} = (e_g(1),\dotsm,e_g(H))^{\text{T}}$. 
As realized in Section~\ref{Obj_par}, to ensure that $y_p^*(t)$ satisfies \eqref{eq:id2a}, the CES provider selects $(\lambda_s(t),e_g(t))$ at time $t$ such that
\begin{equation}
\begin{rcases}
              - \text{min}[\ve{s}(t)]\leq \epsilon(t)\leq 0,~~~~~\text{if all}~\mathcal{P}~\text{are surplus}, \\
             0 \leq \epsilon(t) \leq  -\text{max}[\ve{s}(t)],~~~~~\text{if all}~\mathcal{P}~\text{are deficit},\\
              \epsilon(t)=0,~~~~~~~~~~~~\text{if}~\mathcal{P}~\text{has both types of users}.
\end{rcases}\label{eq:id16}
\end{equation}
where $\epsilon(t)$ is given in \eqref{eq:id14}. $\text{min}[\ve{s}(t)]$ takes the minimum value of the surplus energy profile of the users $\mathcal{P}$, i.e, $\ve{s}(t) = (s_1(t),\dotsm,s_M(t))$, and $\text{max}[\ve{s}(t)]$ takes its maximum value. The strategy set $\mathcal{U}$ is subject to constraints \eqref{eq:id2}, \eqref{eq:id4}, \eqref{eq:id7}, \eqref{eq:id8aa}, \eqref{eq:id23_a} and \eqref{eq:id16}. Since the Hessian is negative definite for all $ \ve{\rho} \in \mathcal{U}$ as $ \mu_1,~\mu_3<0$, the objective function in \eqref{eq:id15} is strictly concave. Additionally, $\mathcal{U}$ is non-empty, closed, and convex as it is subject to linear constraints. Hence, the optimization \eqref{eq:id15} has a unique solution \cite{boyd2004convex}.  

\begin{remark}
The reasons for choosing the lower and upper bounds of the constraints \eqref{eq:id16} are as follows; By comparing \eqref{eq:id14} with \eqref{eq:id2aa}, at the Nash equilibrium, $\epsilon(t) = e_p(t)$ and that gives us $ y^*_p(t)= s_p(t) + e^*_p(t)$ with $e^*_p(t)$ being the traded grid energy of user $p\in \mathcal{P}$ at the Nash equilibrium. If all users in $\mathcal{P}$ are surplus users at time $t$, i.e., $s_p(t) \geq 0,~\forall p \in \mathcal{P}$, then to ensure constraint \eqref{subeq2a_1} is satisfied, it is required that $ 0 \leq  y^*_p(t)= (s_p(t) + e^*_p(t)) =  (s_p(t) + \epsilon(t))\leq s_p(t), ~\forall p \in \mathcal{P}$. Hence, the choice of $\epsilon(t)$ above $- \text{min}[\ve{s}(t)]$ and below $0$ guarantees, \eqref{subeq2a_1} is satisfied for all users in $\mathcal{P}$. A similar argument is applied when there are only deficit users at time $t$, i.e, $s_p(t) < 0,~\forall p \in \mathcal{P}$, where constraint \eqref{subeq2a_2} has to be satisfied. If time $t$ has both types of users, $e^*_p(t)$ is set to zero, and therefore, $\epsilon(t)=0$, so that $y^*_p(t) = s_p(t), \forall p \in \mathcal{P}$. 
\end{remark}

\subsection{Non-cooperative Stackelberg Game}\label{Stack_game}
The strategic form of the non-cooperative Stackelberg game between the CES provider and the users $\mathcal{P}$ is given as $\Theta  \equiv \langle \{\mathcal{L}, \mathcal{F}\}, \{\mathcal{U}, \mathcal{Y}\}, \{W_{s}, \mathcal{C} \}\rangle $. Here, $\mathcal{L}$ is the CES provider, and $\mathcal{F}$ are the users $\mathcal{P}$. A suitable solution for the proposed game $\Theta$ is the Stackelberg equilibrium in which the leader attains their optimal price and grid energy given the followers' equilibrium state. In game theory context, a Stackelberg equilibrium is a stable solution at which none of the players, i.e., the leader or any follower, can benefit by altering their strategy unilaterally. Additionally, in non-cooperative games, it is not always guaranteed to exist an equilibrium in pure strategies \cite{basardynamic}. Proposition \ref{prop2a} below along with Proposition \ref{prop1a} in Section~\ref{Obj_par} guarantees that there exists a \textit{unique Stackelberg equilibrium in pure strategies} for the energy trading game $\Theta$.
 
\begin{proposition}\label{prop2a}
The game $\Theta$ has a unique pure strategy Stackelberg equilibrium.
\end{proposition}

\begin{proof}
For given feasible $(\lambda_s(t), e_g(t))$, the non-cooperative game $\Gamma$ has a unique Nash equilibrium for the energy transactions $y_p(t)$ of the users $\mathcal{P}$ and that is given by $y_p^*(t)$  (see Proposition~\ref{prop1a}). By incorporating $\sum_{p\in \mathcal{P}} y^*_p(t)$, the CES provider's revenue maximization also has a unique solution as proved in Section \ref{Obj_SES}. Thus, as per backward induction, the game $\Theta$ has a unique Stackelberg equilibrium $(\ve{\rho^*}, \ve{y}^*)$ where $\ve{y}^* = (\ve{y}^*(1),\dotsm,\ve{y}^*(H))$, and $\ve{y}^*(t)$ is found by substituting the $\text{t}^{th}$ element in $\ve{\rho^*}$, i.e., $\ve{\rho^*}(t) = (\lambda_s^*(t), e_g^*(t))$, in \eqref{eq:id14} for each user $n\in \mathcal{P}$ \cite{gametheoryessentials}. 
\end{proof}

Note that the equilibrium $(\ve{\rho^*}, \ve{y}^*)$ of the game $\Theta$ satisfies
\begin{multline}
C_p( \ve{y}^*(t),\ve{\rho^*}) \leq C_p((y_p(t), \ve{y}^*_{{\mathcal{P}\backslash p}}(t)), \ve{\rho^*}),\\
~\forall p\in \mathcal{P}, \forall y_p(t) \in \mathcal{Y}_p(t),~\forall t\in \mathcal{T} ,\label{eq:id17a}
\end{multline}
\begin{equation}
 \begin{split}
W_s(\ve{y}^*,\ve{\rho^*}) \geq W_s(\ve{y}^*,\ve{\rho}),~\forall \ve{\rho} \in \mathcal{U}.
\end{split}\label{eq:id18a}
\end{equation}
In \eqref{eq:id17a}, $\ve{y}^*_{{\mathcal{P}\backslash p}}(t)$ is the Nash equilibrium strategy profile of the other users $\mathcal{P}\backslash p$ at time $t$, i.e., $\ve{y}^*_{{\mathcal{P}\backslash p}}(t) = (y^*_1(t),\dotsm,y^*_{p-1}(t),y^*_{p+1}(t),\dotsm,y^*_M (t))$. 

To implement the game $\Theta$, the CES provider solves \eqref{eq:id15} and then, announces $\ve{\rho^*}$ to the users $\mathcal{P}$ to find their Nash equilibrium solutions using \eqref{eq:id14}. To solve \eqref{eq:id15}, the CES provider needs to know the aggregated surplus energy of the users $\mathcal{P}$, i.e., $\sum_{p \in \mathcal{P}}s_p(t)$ to calculate the objective function coefficient \eqref{eq:id15_coeff2} and to calculate $e_s(t)$ for the constraints \eqref{eq:id2}, \eqref{eq:id4}, \eqref{eq:id7}, and \eqref{eq:id8aa} in $\mathcal{U}$. Note that, from \eqref{eq:id2aa} and \eqref{eq:id1}, $e_s(t)$ is a function of $\sum_{p \in \mathcal{P}}s_p(t)$. Additionally, the CES provider requires the information of the maximum or the minimum surplus energy amount of the users $\mathcal{P}$ as required by \eqref{eq:id16}, and the aggregated surplus energy and the aggregated reactive power demand of the users $\mathcal{P}_i,~\forall i \in \mathcal{V}$, i.e., $\sum_{p\in \mathcal{P}_i}s_p(t)$ and $\sum_{p\in \mathcal{P}_i}q_p(t)$, as required by the voltage constraint \eqref{eq:id23_a}. As such, the disclosure of individual strategies or energy usage information of the users $\mathcal{P}$ to the CES provider is not required.

\section{Results and discussion}\label{sec:5}
We consider the realistic 7-bus radial feeder in \cite{feeder_model} with a 22/0.4 kV, 185 kVA distribution transformer (see Fig.~\ref{fig:LV_feeder}). 
\begin{figure}[b!]
\centering
\includegraphics[width=0.85\columnwidth]{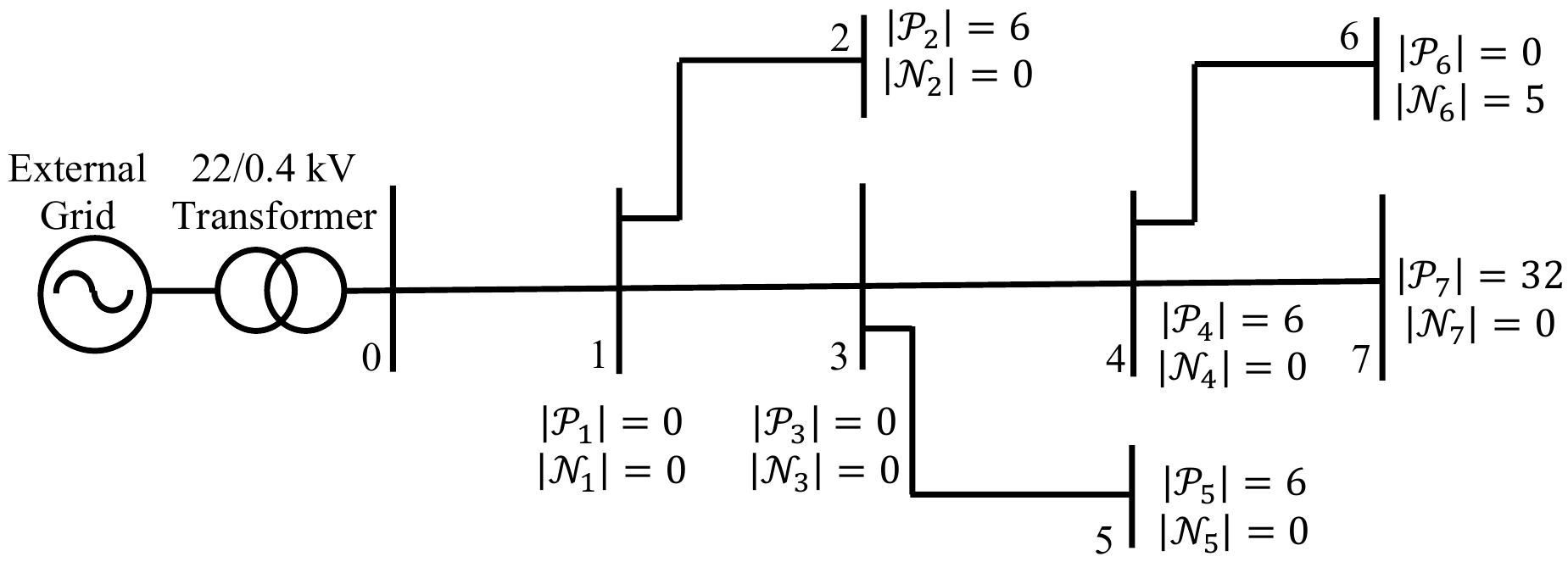}
\caption{One line diagram of the radial feeder. ~$|\mathcal{P}_{i}|$ - Number of participating users at bus $i$,~$|\mathcal{N}_{i}|$- Number of non-participating users at bus $i$.}
\label{fig:LV_feeder}
\end{figure}
It is considered that the feeder supplies 55 users and hence, $|\mathcal{A}| = 55$. As in \cite{feeder_model}, the secondary voltage of the transformer is set at 1.0 p.u., and the maximum and minimum voltage limits of the feeder are taken as 1.05 p.u. and 0.95 p.u., respectively. 
 
The active power demand and PV power generation profiles are chosen from a real dataset that includes 5-min PV power and demand measurements of a set of residential users in Canberra, Australia in 2018 \cite{NextGen}. For simulations in Sections~\ref{results_1},~\ref{results_3}, and \ref{centralized}, the daily PV power generation and demand profiles of the 55 users are generated such that they represent the average daily PV power and demand profiles of the selected 55 users in Autumn 2018. For each user, 92 daily PV and demand profiles were used to generate their average daily PV and demand profiles in Autumn.
Due to the lack of realistic reactive power demand data, reactive power demand of the users $\mathcal{A}$ is not considered. The total number of users at each bus of the feeder is calculated in proportion to the PV system allocation in \cite{feeder_model}. For instance, to represent that the bus 6 does not have PV systems, randomly selected 5 users from 55 users are allocated to bus 6 assuming that they are the users $\mathcal{N}$. The rest of the 50 users are considered as the users $\mathcal{P}$ and are allocated to other buses as shown in Fig.~\ref{fig:LV_feeder}. Additionally, $\Delta t = 5/60~\text{hrs}, ~H = 288$, $B_{\text{max}}= 700~ \text{kWh},~\gamma^{\text{ch}}_{\text{max}}=\gamma^{\text{dis}}_{\text{max}}=150~ \text{kW},~ \eta_c = 0.98,~\eta_d = 1.02,~B_{\text{min}}=0.05B_{\text{max}}$. $\delta_t$ is set as a constant equal to the average price of a reference two-step time-of-use (TOU) price signal in \cite{Origin} where the peak period is 07.00-23.00 (time intervals 85-276). $\phi_t$ is selected such that $\frac{\phi_{\text{peak}}}{\phi_{\text{off-peak}}} = \frac{\text{Price}_{\text{peak,ref-TOU}}}{\text{Price}_{\text{off-peak, ref-TOU}}} = 2.12$. Then $\phi_{\text{off-peak}}$ is set such that the difference between the peak and the off-peak prices of the reference TOU signal is equal to the difference between the predicted maximum grid price in the peak hours and the minimum grid price in the non-peak hours of the system. $\lambda_{\text{min}}$ is set at the reference TOU off-peak price $18.5 \text{~AU cents/kWh}$.

As realized in Section~\ref{Obj_SES} with backward induction, the CES provider's theoretical optimal revenue at the Stackelberg equilibrium can be obtained by optimizing \eqref{eq:id15}. Hence, in our simulations, to obtain the Stackelberg equilibrium of the game $\Theta$, first, \eqref{eq:id15} was solved for $\ve{\rho^*}$ using the interior point algorithm in MATLAB fmincon solver, and then the elements $\ve{\rho^*}(t) = (\lambda_s^*(t), e_g^*(t))$ were substituted in \eqref{eq:id14} to obtain $\ve{y}^*$. All simulations were conducted using MATLAB and a laptop with 2.7 GHz Intel Core i7 processor and 16 GB RAM. Under these settings, the computation time for finding $\ve{\rho^*}$ in simulations in Section~\ref{results_1} was 66.8 seconds. 

\subsection{Impact on Voltage Profiles and Energy Costs} \label{results_1}
Here, the impacts of the energy trading system on the bus voltages and the economic benefits for the CES provider and the users are compared with a baseline without a CES system. In the baseline, the users $\mathcal{P}$ trade energy only with the external grid through the retailer at price $\lambda_g(t)$ in \eqref{eq:id8}. In the energy trading system, the CES system is placed at bus 7 in Fig.~\ref{fig:LV_feeder}. 

Before the time 105 and after the time 209, all users in $\mathcal{P}$ are deficit users with little or zero PV energy generation, and that results in positive aggregate energy consumptions at each bus with the baseline as shown in Fig.~\ref{fig:enrgInj_buses_baseline}(a).
\begin{figure}[b!]
\centering
\includegraphics[width=0.75\columnwidth]{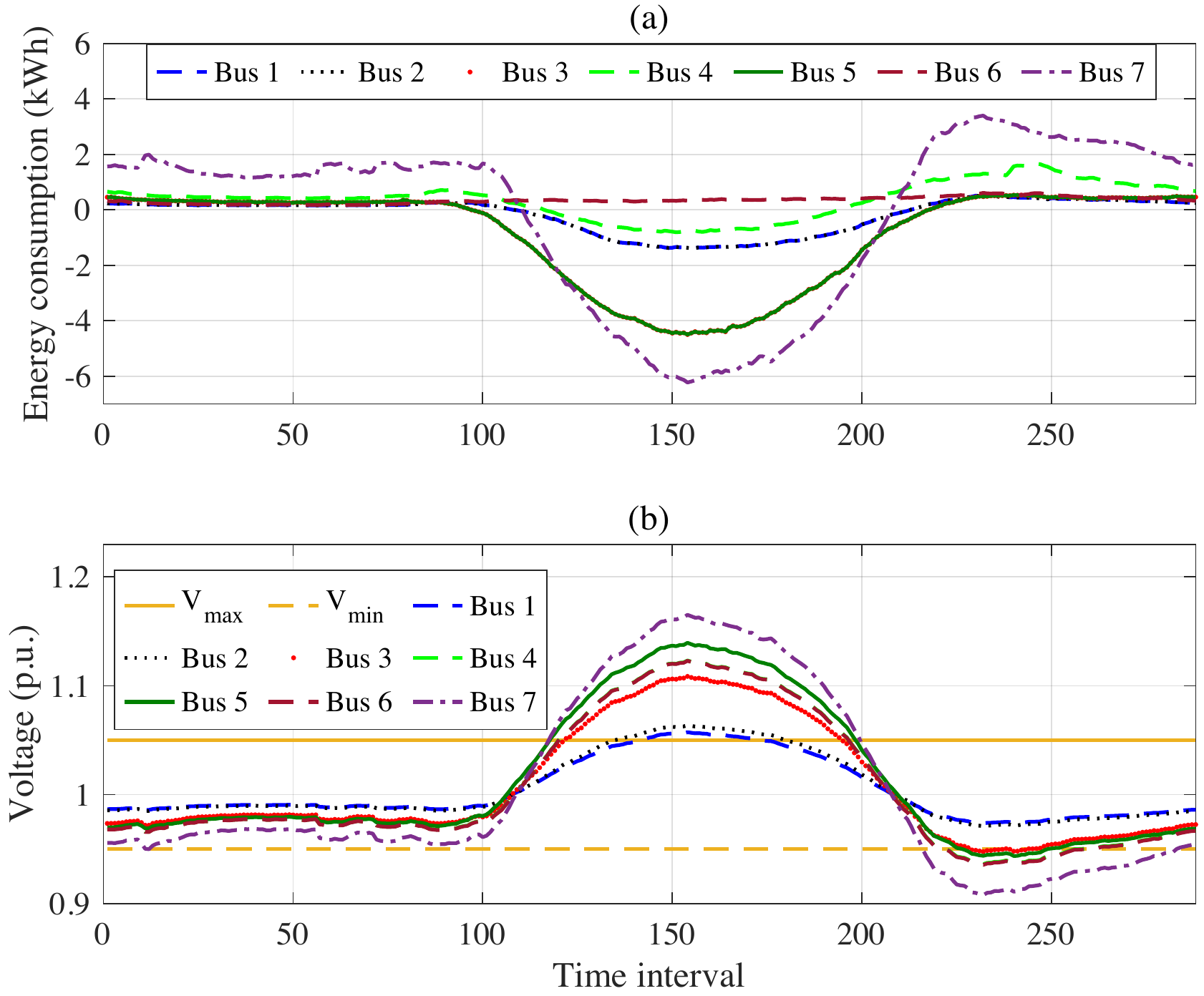}
\caption{(a) Aggregate bus energy consumptions and (b) bus voltages with the baseline.}
\label{fig:enrgInj_buses_baseline}
\end{figure}
Fig.~\ref{fig:Agg_GridLoad}(a) depicts, in the baseline, the aggregate positive electricity load on the external grid is greatest between times 216 and 281. 
\begin{figure}[t!]
\centering
\includegraphics[width=0.75\columnwidth]{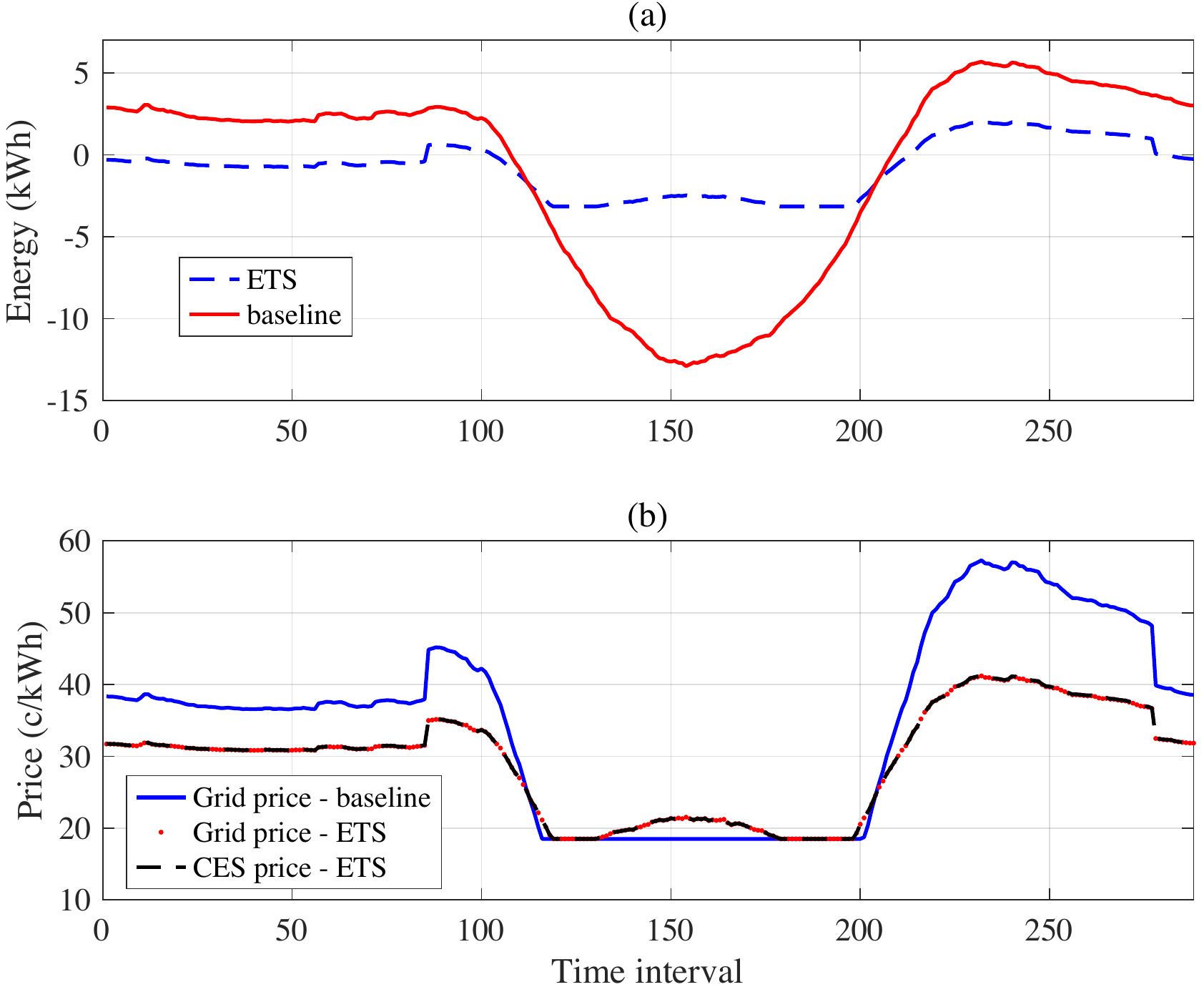}
\caption{(a)~Total external grid load $E(t)$ and (b) prices of the energy trading system (ETS) and the baseline.}
\label{fig:Agg_GridLoad}
\end{figure} 
Simultaneously, the voltages at buses 3 - 7 drop below the lower limit 0.95 p.u. causing under-voltages (see Fig.~\ref{fig:enrgInj_buses_baseline}(b)), and the lowest voltage $0.908$ p.u. occurs at bus 7.  As per Fig.~\ref{fig:enegy_transac.}, in the energy trading system, the users $\mathcal{P}$ buy more energy from the CES than from the external grid  $(|\sum_{p\in \mathcal{P}}y_p(t)| > |E_{\mathcal{P}}(t)|)$ to supply their peak energy demand after the time 209. 
\begin{figure}[t!]
\centering
\includegraphics[width=0.75\columnwidth]{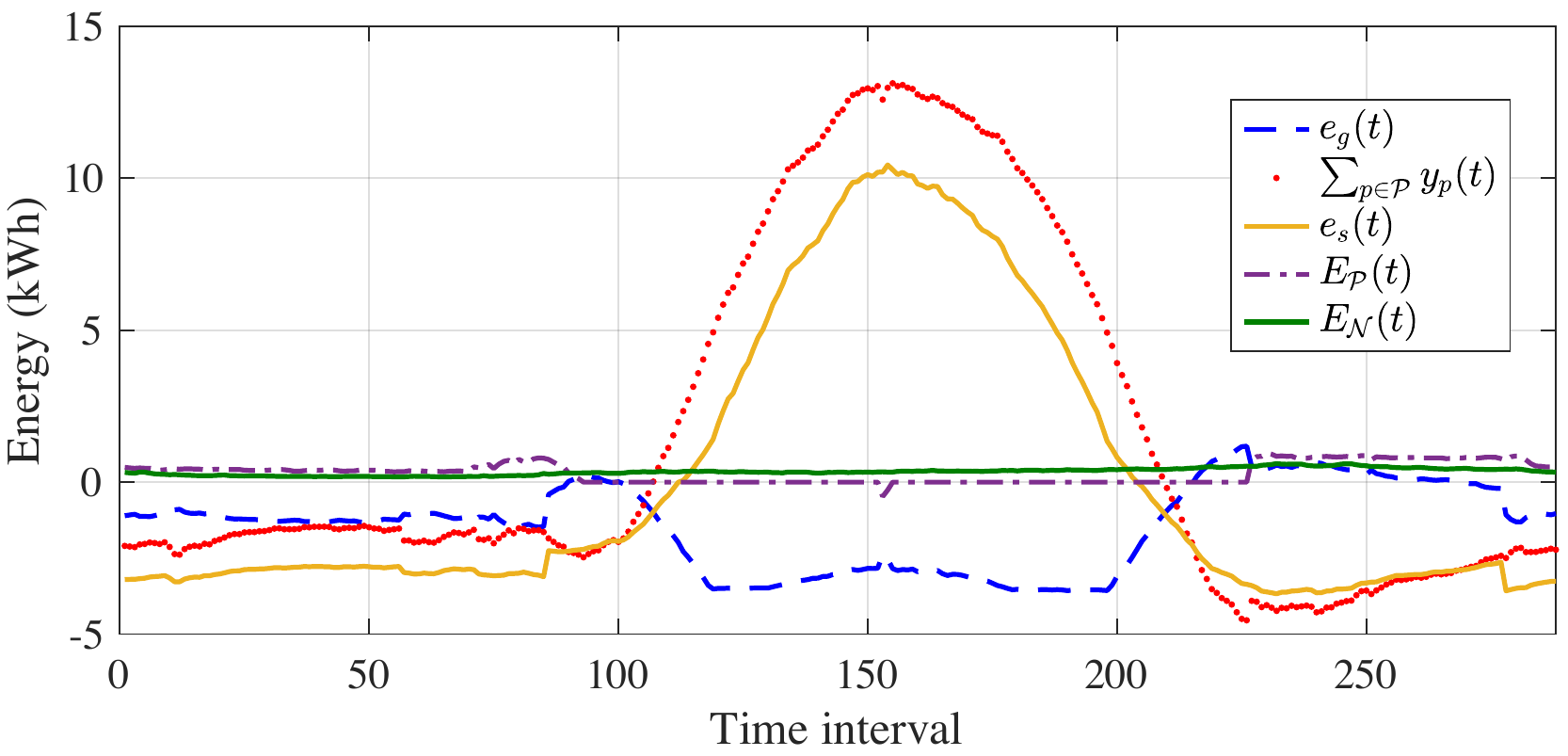}
\caption{Energy transactions in the energy trading system.}
\label{fig:enegy_transac.}
\end{figure}
Consequently, the CES system discharges $(e_s(t)<0)$, and the voltages at all buses remain within limits as shown in Fig.~\ref{fig:voltage_prof_together}(b). 
\begin{figure}[b!]
\centering
\includegraphics[width=0.75\columnwidth]{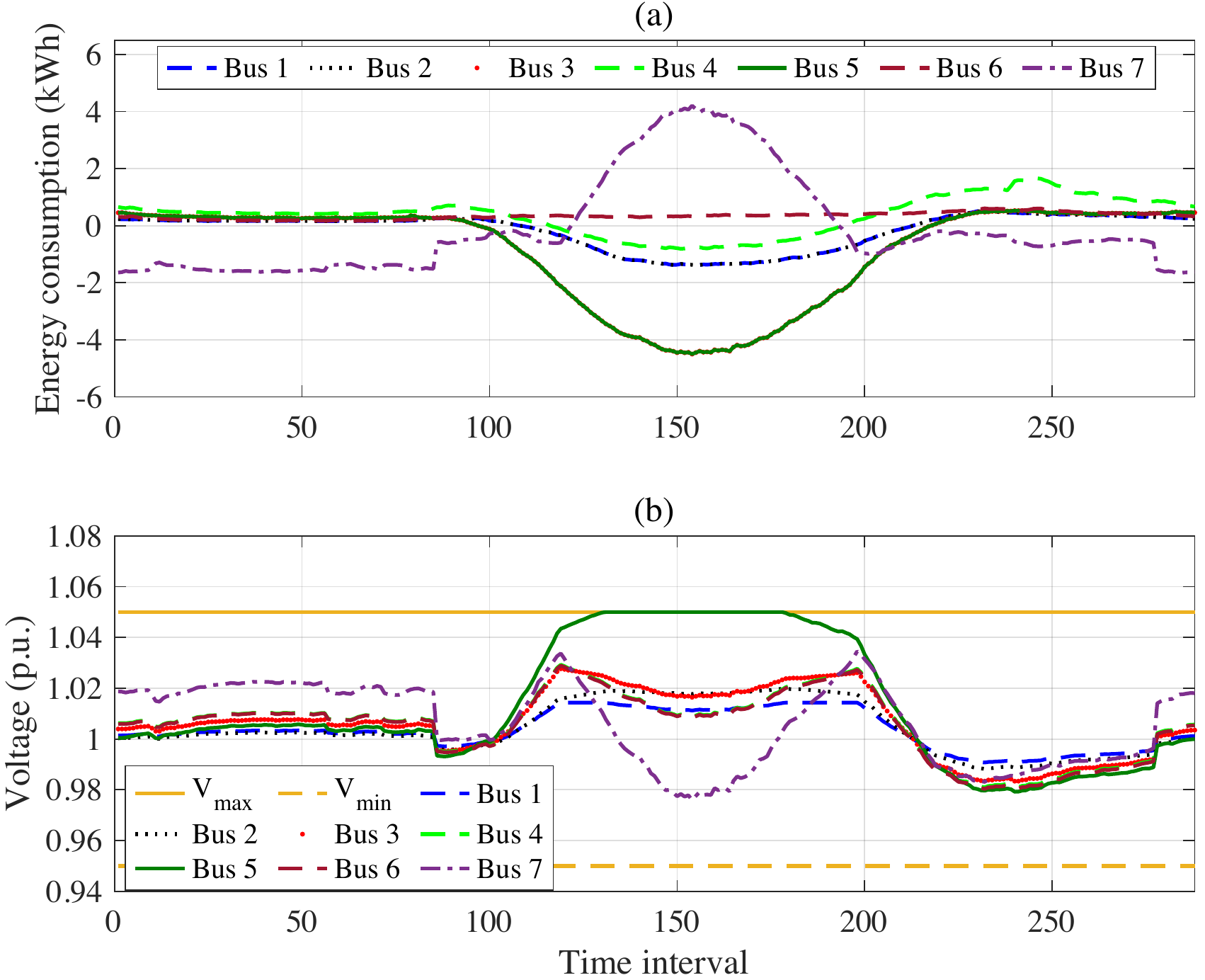}
\caption{(a) Aggregate bus energy consumptions and (b) bus voltages with the energy trading system.}
\label{fig:voltage_prof_together}
\end{figure}
Additionally, the total afternoon peak demand on the external grid is reduced by nearly $64\%$ (from 5.69 kWh to 2.02 kWh) compared to the baseline as depicted in Fig.~\ref{fig:Agg_GridLoad}(a).

At midday, between times 106 and 208, PV energy is plentiful, and nearly all users in $\mathcal{P}$ have positive surplus energy. Hence, each bus experiences negative aggregate energy consumptions in the baseline expect bus 6 where only non-participating users exist (see Fig.~\ref{fig:enrgInj_buses_baseline}(a)). Consequently, buses 1 - 7 experience over-voltages in the baseline as shown in Fig.~\ref{fig:enrgInj_buses_baseline}(b). In particular, the highest voltage $1.165$ p.u. occurs at bus 7 with the greatest negative energy consumption.
In the energy trading system, between times 106 and 208, the users $\mathcal{P}$ sell nearly all of their positive surplus energy to the CES and that leads to the charging mode of the CES system $(e_s(t) > 0)$ (see Fig.~\ref{fig:enegy_transac.}). As a result, the energy consumption at bus 7, i.e., $-\sum_{p\in \mathcal{P}_7}s_p(t) + e_s(t)$, becomes positive between times 121 and 192  as depicted in Fig.~\ref{fig:voltage_prof_together}(a). This helps regulate the voltages at all buses below the threshold 1.05 p.u during the over-voltage period as shown in Fig.~\ref{fig:voltage_prof_together}(b). Note that, in Figs. \ref{fig:enrgInj_buses_baseline}(a) and \ref{fig:voltage_prof_together}(a), the energy consumption profiles at buses 3 and 5 are the same because there are no users at bus 3. Additionally, in Figs.~\ref{fig:enrgInj_buses_baseline}(b) and \ref{fig:voltage_prof_together}(b), the voltage profiles at buses 4 and 6 closely follow each other due to the negligible voltage drop between the two buses.

The reduced external grid load $E(t)$ before time 105 and after time 209 leads to reduced grid prices $\lambda_g(t)$ compared to the baseline as shown in Fig.~\ref{fig:Agg_GridLoad}(b). Consequently, the average cumulative daily energy cost of the users $\mathcal{P}$ is reduced by 83\% (from $160$ AU cents to $27$ AU cents).  Additionally, the CES provider receives a cumulative revenue of  $7376$ AU cents in the energy trading system. Due to the reduced grid price, the average daily cost for the users $\mathcal{N}$ is reduced by $17\%$ (from $723$ AU cents to $597$ AU cents). 

\subsection{Voltage Constraints on the Energy Trading System}\label{results_3}
Here, we investigate the impacts of the energy trading schedules determined with and without the voltage constraint \eqref{eq:id23_a} in the game-theoretic optimization in Section~\ref{sec_system}. The same simulation setup as in Section \ref{results_1} is used. 

As shown in Fig.~\ref{fig:without_vol_const}, without the voltage constraints in the optimization, the voltages at all buses remain within the limits except bus 5.
\begin{figure}[b!]
\centering
\includegraphics[width=0.75\columnwidth]{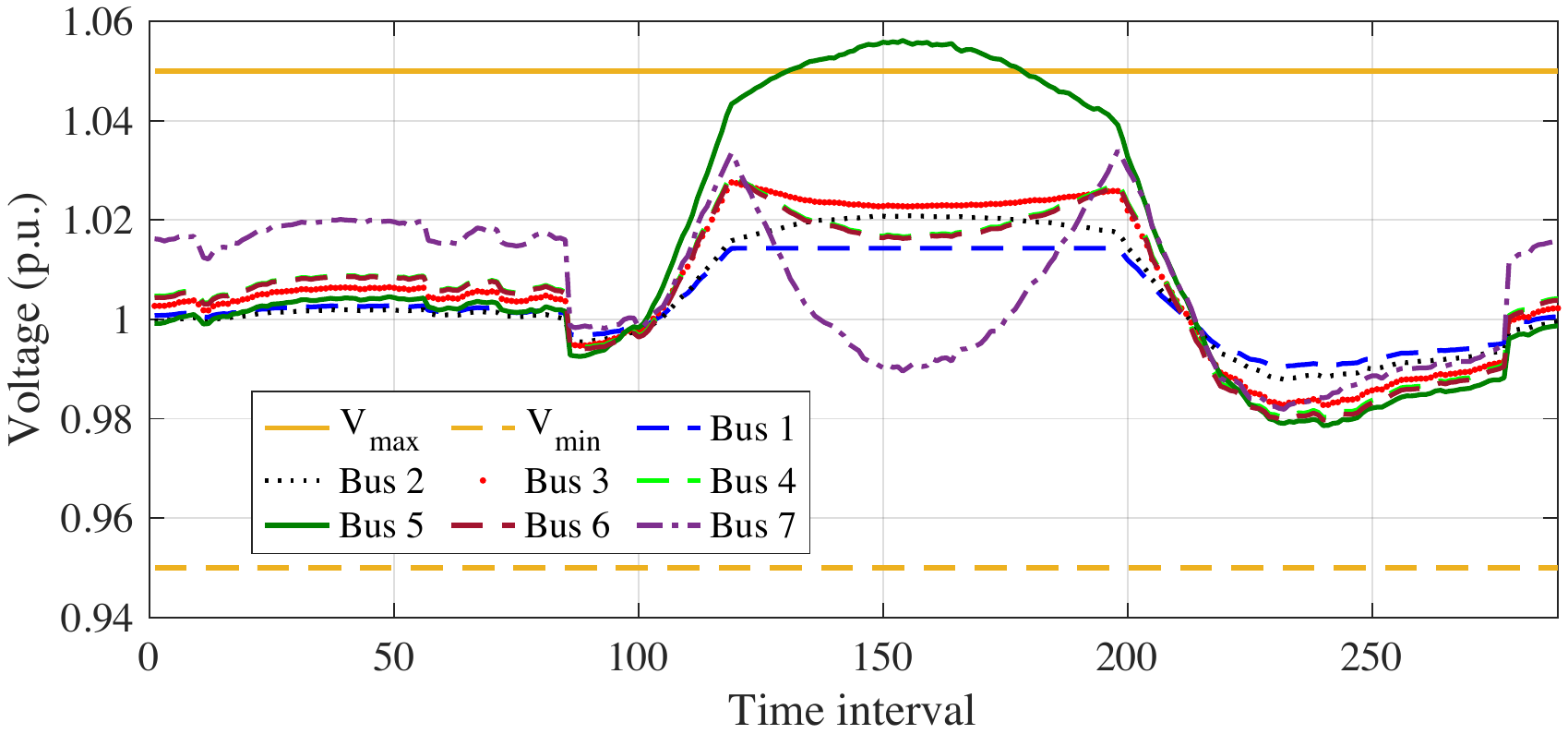}
\caption{Bus voltages without the voltage constraint in the optimization.}
\label{fig:without_vol_const}
\end{figure}
Note that in Fig. \ref{fig:without_vol_const}, the voltages at buses 4 and 6 closely follow each other due to the negligible voltage drop. Once the voltage constraints are introduced, the voltages at all buses fall within the limits as shown in Fig.~\ref{fig:voltage_prof_together}(b). With the voltage constraints, more energy is drawn by the CES system $(e_s(t) > 0)$ at midday than in the system without the voltage constraints and this, in turn, mitigates over-voltages of all buses including bus 5. The users $\mathcal{P}$ follow nearly the same energy trading strategies, $\sum_{p \in \mathcal{P}}y_p(t)$ and $E_{\mathcal{P}}(t)$, despite the voltage constraints. Additionally, the CES provider pays a higher price at midday for buying surplus PV energy from the users $\mathcal{P}$ when the voltage constraints are introduced. Hence, the average cumulative daily energy cost for the users $\mathcal{P}$ in the system with the voltage constraints (27 AUD cents) is reduced by $50\%$ compared to the system without the voltage constraints (52 AUD cents). However, the CES provider's revenue reduces by $9\%$ in the system with the voltage constraints (from 8091 AUD cents to 7376 AUD cents) as a result of paying a higher price for buying energy from the users $\mathcal{P}$ at midday.

\subsection{Comparison with a Centralized Energy Trading System}\label{centralized}
Here, we compare the performance of the game-theoretic energy trading system in Section~\ref{sec_system} with a centralized system. In the centralized system, the CES provider schedules the energy transactions $e_g(t)$ and $y_p(t)$ by minimizing the total cost paid by the entire community, i.e., the users $\mathcal{A}$ and the CES provider, to the retailer. Hence, its objective is to minimize $\sum_{t=1}^H \lambda_g(t)E(t)$ subject to constraints \eqref{eq:id2a},~\eqref{eq:id2}, \eqref{eq:id4}, \eqref{eq:id7}, \eqref{eq:id8aa}, and \eqref{eq:id23_a}. In the centralized system, the objective function does not include a price signal for the energy transactions $y_p(t)$ between the CES system and the users $\mathcal{P}$. The centralized system serves as a baseline and is a potential implementation for the energy trading between the CES system and the users $\mathcal{P}$. The simulation parameters are chosen as in Section~\ref{results_1}.

As shown in Fig.~\ref{fig:energt_tran_cent}, in the centralized system, the CES provider sells more energy to the external grid $(e_g(t)<0)$ before time 105 and after time 209 compared to the game-theoretic system.
\begin{figure}[b!]
\centering
\includegraphics[width=0.75\columnwidth]{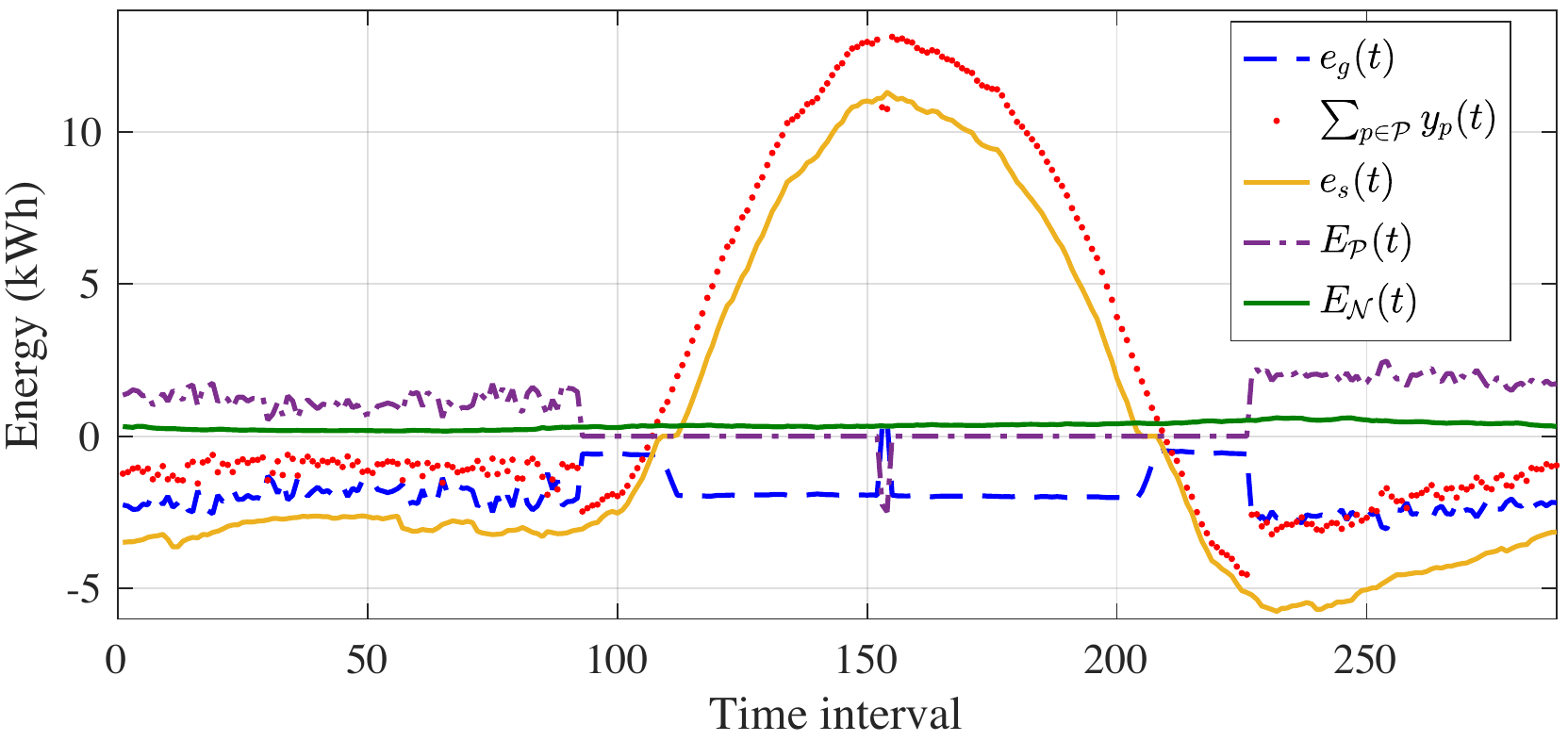}
\caption{Energy transactions in the centralized system.}
\label{fig:energt_tran_cent}
\end{figure}
This leads to a greater reduction of the total grid load $E(t) (= E_{\mathcal{P}}(t) + E_{\mathcal{N}}(t) + e_g(t))$ and hence, a greater reduction of the price $\lambda_g(t)$. At midday, the users $\mathcal{P}$ sell nearly all their positive surplus PV energy to the CES in both systems (see Fig.~\ref{fig:enegy_transac.} and Fig.~\ref{fig:energt_tran_cent}). However, the users $\mathcal{P}$ in the centralized system make nearly a zero income at midday with no price being offered to their CES energy transactions $y_p(t)$. With the CES price $\lambda_s(t)$, the users $\mathcal{P}$ in the game-theoretic system make a positive income by selling PV energy to the CES system at midday. Hence, in the centralized system, the average cumulative daily energy cost for the users $\mathcal{P}$ only reduces by 11\% whereas in the game-theoretic system, they receive nearly 83\% average cost reduction compared to the baseline without the CES system demonstrated in Section~\ref{results_1}. 

Similar to the game-theoretic system (as shown in Fig.~\ref{fig:voltage_prof_together}(b)), in the centralized system, the lowest voltage (0.96 p.u) occurs in bus 7 and the highest voltage (1.04 p.u.) occurs in bus 5 at midday. As shown in Fig.~\ref{fig:soc_central}, the CES system saturates at $B_{\text{max}}$ in the centralized system as a result of more energy drawn by the CES system ($e_s(t)>0$) at midday. 
\begin{figure}[t!]
\centering
\includegraphics[width=0.75\columnwidth]{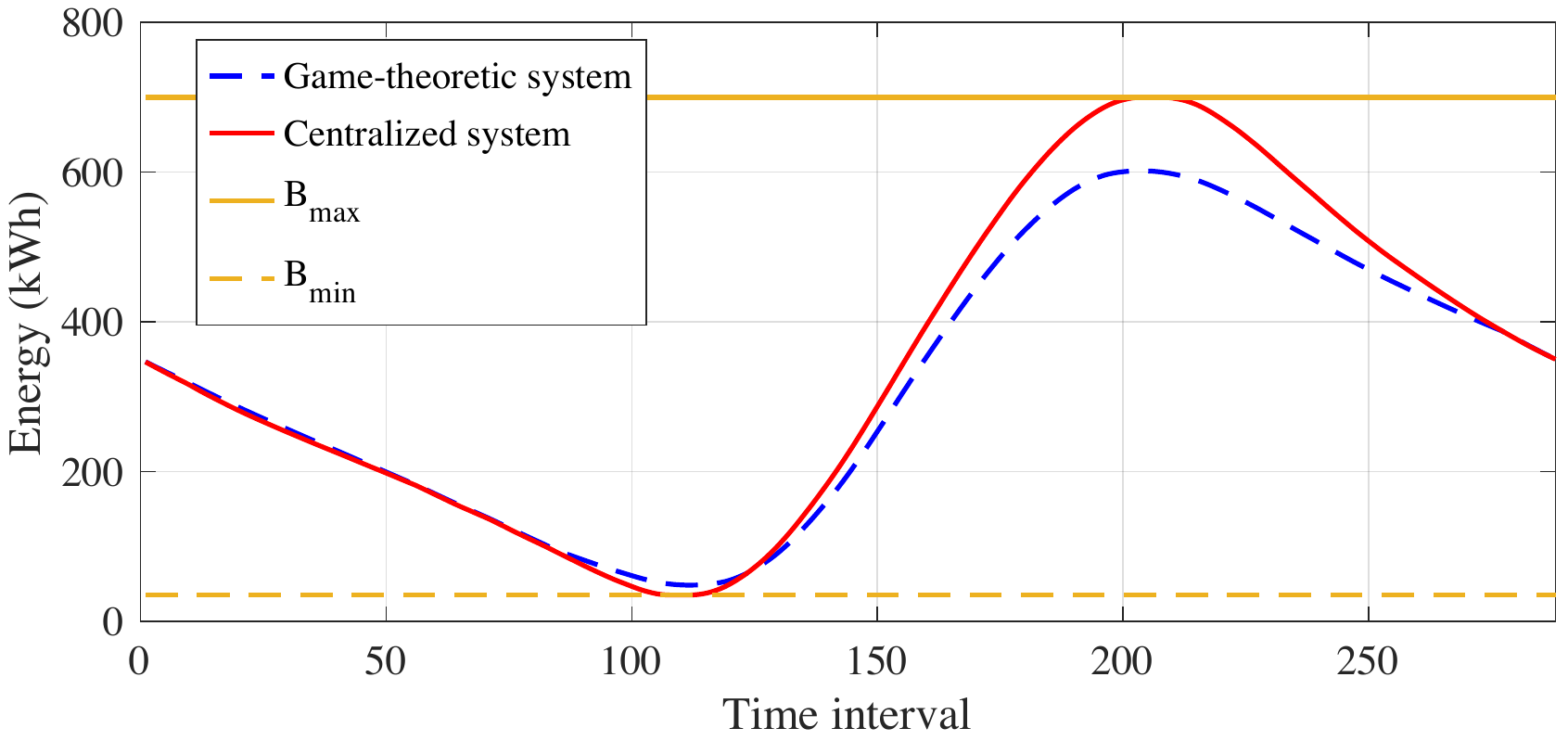}
\caption{Variations of the energy charge level of the CES system.}
\label{fig:soc_central}
\end{figure}
Consequently, the lowest voltage in the centralized system becomes 2\% less than that of the game-theoretic system (0.98 p.u.), and the highest voltage in the centralized system becomes 0.9\% less than that of the game-theoretic system (1.05 p.u.). However, Fig.~\ref{fig:soc_central} illustrates that the centralized system requires a larger energy storage capacity to this end.

\subsection{Impacts of Seasonal Changes in Demand and PV Power Profiles on the Energy Trading System}\label{result_4}
Here, the energy trading system performance is evaluated by changing the average daily PV power and demand profiles of the 55 users by season; summer, autumn, winter, and spring. The seasonal profiles were generated using the corresponding demand and PV power measurements in \cite{NextGen}. Here, $B_{\text{max}} = 950~\text{kWh}$,~$\gamma^{\text{ch}}_{\text{max}} = \gamma^{\text{dis}}_{\text{max}} = 300~\text{kW}$ so that the CES system can accommodate energy transactions without saturating at those limits when using the average demand and PV power profiles of all four seasons. All the other parameters are as in Section~\ref{results_1}. For comparison, the baseline without the CES system in Section~\ref{results_1} is used. To demonstrate the variations of seasonal aggregate bus energy consumptions, we include Fig.~\ref{fig:agg_energy_consump_seasons_bus7} that depicts the seasonal aggregate energy consumptions at bus 7 with the baseline. 
\begin{figure}[b!]
\centering
\includegraphics[width=0.75\columnwidth]{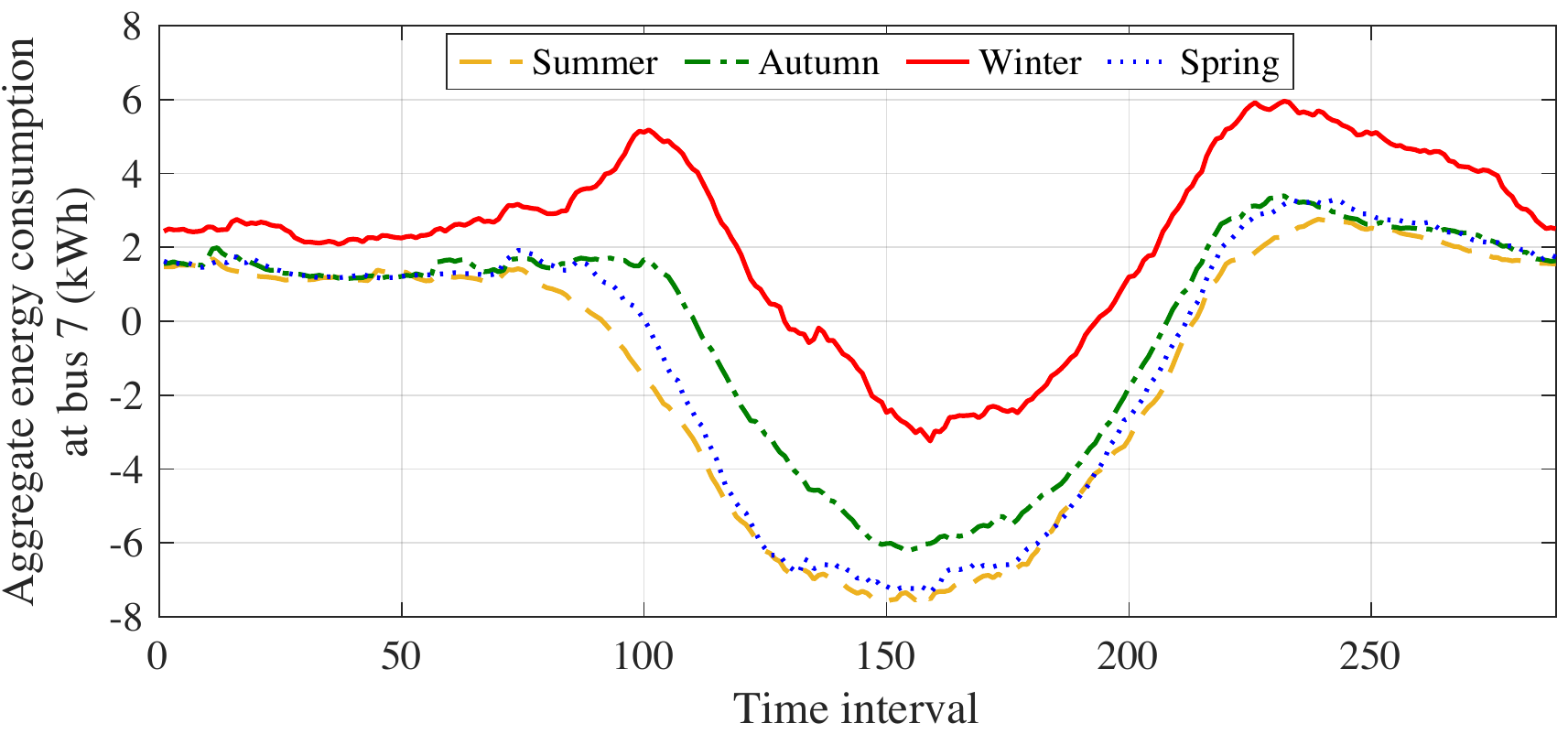}
\caption{Seasonal aggregate energy consumption at bus 7 in the baseline.}
\label{fig:agg_energy_consump_seasons_bus7}
\end{figure}

Fig.~\ref{fig:voltage_seasons} depicts the boxplots of 24-hr voltage distributions of all buses when the average demand and PV power profiles are changed by season.
\begin{figure}[t!]
\centering
\includegraphics[width=0.98\columnwidth]{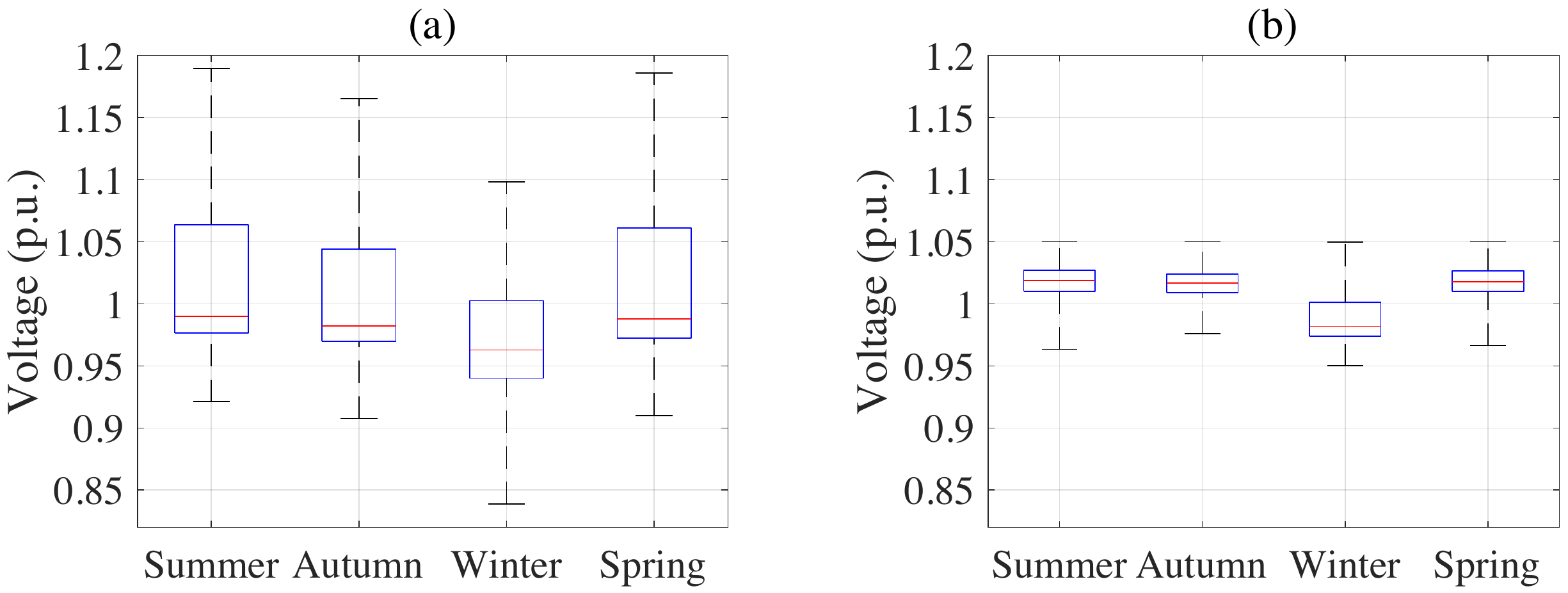}
\caption{Voltage distributions of all buses (a) in the baseline and (b) in the energy trading system with average demand and PV profiles of four seasons.}
\label{fig:voltage_seasons}
\end{figure}
The horizontal bars of the whiskers represent the maximum and the minimum voltages experienced by the feeder in the 24-hr period. As illustrated in Fig.~\ref{fig:voltage_seasons}(a), in the baseline, with summer profiles, the feeder experiences the highest over-voltage condition, 1.19 p.u. at bus 7, due to the greatest negative energy consumption at midday. Additionally, the highest under-voltage condition, 0.84 p.u. at bus 7, is experienced in winter with the greatest positive energy consumption by the users in the afternoon as shown in Fig.~\ref{fig:agg_energy_consump_seasons_bus7}. The energy trading system is capable of bringing the voltages within the limits for all seasons as shown in Fig.~\ref{fig:voltage_seasons}(b). 

Fig.~\ref{fig:CES_REV_PU_Cost_seasons} compares the normalized average cumulative energy cost of the users $\mathcal{P}$ and the normalized CES provider revenue in four seasons.
\begin{figure}[t!]
\centering
\includegraphics[width=0.98\columnwidth]{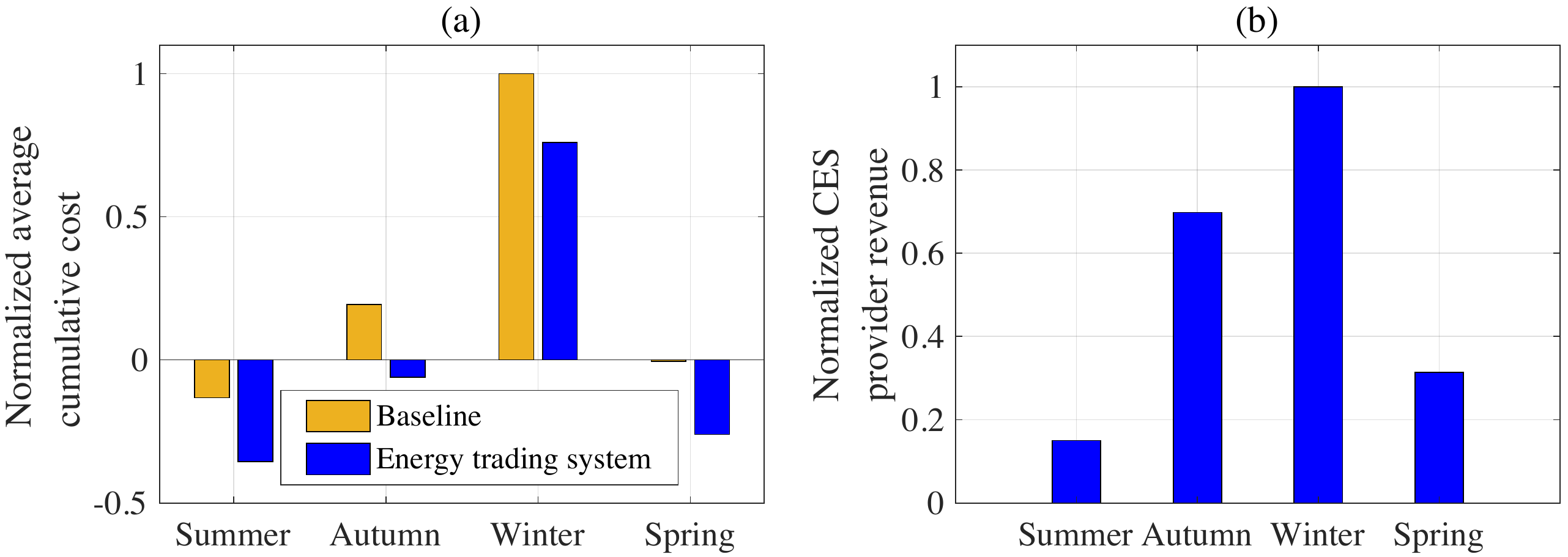}
\caption{ (a) Normalized average cumulative cost of the users $\mathcal{P}$ (b) normalized CES provider revenue with average demand and PV profiles of four seasons.}
\label{fig:CES_REV_PU_Cost_seasons}
\end{figure}
As shown in Fig.~\ref{fig:CES_REV_PU_Cost_seasons}(a), in winter, the users $\mathcal{P}$ incur the highest positive energy cost in both baseline and the proposed energy trading systems. This is because, in winter, the users buy more energy than selling due to having the greatest energy demand with less PV power generation. In summer, due to having plenty of excessive PV power, the users $\mathcal{P}$ generate the highest revenue (negative cost) by selling PV power, and the users $\mathcal{P}$ receive more revenue in the energy trading system than in the baseline. In Fig.~\ref{fig:CES_REV_PU_Cost_seasons}(b), the CES provider's revenue is presented only for the energy trading system as there is no CES system in the baseline. 
Because the users $\mathcal{P}$ buy more energy from the CES system in winter, the CES provider receives the greatest revenue in winter, whereas, in summer, due to selling more PV energy by the users $\mathcal{P}$ to the CES system, the CES provider receives the least revenue.

\section{Conclusion}\label{conclusion}
With the ability to locate close to users, community energy storage (CES) systems can be utilized to deliver demand-side management and voltage support for low-voltage distribution networks. In this paper, we have investigated the extent to which a CES system can reduce voltage excursions and peak energy demand of a radial distribution network by developing a decentralized energy trading system between a CES system and the users with rooftop photovoltaic (PV) power generation. By employing a linearized branch flow model, a voltage-constrained Stackelberg game was developed where the CES provider and users can maximize their personal economic benefits. It has been shown that the energy trading system can deliver significant peak energy demand reduction and economic benefits for both the CES provider and the users while satisfying the network voltage limits.

Future work includes developing a stochastic model to incorporate uncertainties of demand and PV power generation and imperfect knowledge from players' actions in the energy trading system, exploring the energy trading system operation to accommodate the possibility of unbalanced conditions in three-phase distribution networks, and extending the system to incorporate the PV and CES inverter reactive power control.






\begin{IEEEbiography}{Chathurika P. Mediwaththe}
(S’12-M’17) received
the B.Sc. degree (Hons.) in electrical and electronic
engineering from the University of Peradeniya, Sri
Lanka. She completed the PhD degree in electrical
engineering at the University of New South Wales,
Sydney, Australia in 2017. From 2013-2017,
she was a researcher with Data61-CSIRO
(previously NICTA), Sydney, NSW, Australia. She
is currently a research fellow with the Research
School of Electrical, Energy and Materials Engineering and the Battery Storage and Grid
Integration Program, Australian National University, Australia. She is the chair of the Young Professionals group of the IEEE Australian Capital Territory chapter. Her current
research interests include electricity demand-side management, renewable
energy generation and energy storage integration, microgrids and distribution network operation, game theory and optimization 
for distributed energy resource allocation.
\end{IEEEbiography}

\begin{IEEEbiography}{Lachlan Blackhall}
(S’07-M’11-SM’17) received the BE (Hons 1M) and BSc (Adv. Mathematics) from the University of Sydney, Australia in 2007. He received a PhD in Systems and Control Theory at the Australian National University (ANU), Australia in 2011. He is co-founder and previously CTO of Reposit Power and is currently Associate Professor and Head, Battery Storage and Grid Integration Program at the ANU. He is a fellow of both the Australian Academy of Technology and Engineering (ATSE) and of the Institute of Engineers Australia (IEAust). His current research interests include the optimisation and control of distributed energy resources in electricity distribution networks and the operation of power systems with high uptake renewable generation and distributed energy resources.
\end{IEEEbiography}

\ifCLASSOPTIONcaptionsoff
  \newpage
\fi
\end{document}